\newcommand{\N}{{\mathbb{N}}}  % the natural numbers
\newcommand{\Z}{{\mathbb{Z}}}  % the integers
\newcommand{\Q}{{\mathbb{Q}}} % the rational numbers
\newcommand{\R}{{\mathbb{R}}}  % the real numbers
\newcommand{\C}{{\mathbb{C}}}  % the complex numbers
\newcommand{\pfa}{{\mathcal{P}}}  % tthe empty word
\newcommand{\0}{\mathbf{0}} % Zero matrix
\theoremstyle{plain}
\newtheorem{thm}{Theorem}
\newtheorem{lem}[thm]{Lemma}
\newtheorem{prob}[thm]{Problem}
\newtheorem{cor}[thm]{Corollary}
\newtheorem{prop}[thm]{Proposition}
\newenvironment{customcor}[1]
  {\innercustomcor}
  {\endinnercustomcor}
\newlength\myindent
\newcommand\bindent{%
  \begingroup
  \setlength{\itemindent}{\myindent}
  \addtolength{\algorithmicindent}{\myindent}
}
\newcommand\eindent{\endgroup}
\title{Decidability of cutpoint isolation for probabilistic finite automata on letter-bounded inputs}
\titlerunning{Decidability of cutpoint isolation for PFA on letter-bounded inputs}%optional, please use if title is longer than one line
\author{Paul C. Bell}{Department of Computer Science, James Parsons Building, Byrom Street, Liverpool John Moores University, Liverpool, L3 3AF, UK}{p.c.bell@ljmu.ac.uk}{https://orcid.org/0000-0003-2620-635X}{}%mandatory, please use full name; only 1 author per \author macro; first two parameters are mandatory, other parameters can be empty.
\author{Pavel Semukhin}{Department of Computer Science, University of Oxford, Wolfson Building, Parks Road, Oxford, OX1 3QD, UK}{pavel.semukhin@cs.ox.ac.uk}{https://orcid.org/0000-0002-7547-6391}{}
\authorrunning{P.\,C. Bell and P. Semukhin}%mandatory. First: Use abbreviated first/middle names. Second (only in severe cases): Use first author plus 'et al.'
\keywords{Probabilistic finite automata; \and cutpoint isolation problem; \and letter-bounded  context-free languages}%mandatory
\begin{document}

\maketitle

\begin{abstract}
We show the surprising result that the cutpoint isolation problem is decidable for probabilistic finite automata where input words are taken from a letter-bounded context-free language. A context-free language $\mathcal{L}$ is letter-bounded when $\mathcal{L} \subseteq a_1^*a_2^* \cdots a_\ell^*$ for some finite $\ell > 0$ where each letter is distinct. A cutpoint is isolated when it cannot be approached arbitrarily closely. The decidability of this problem is in marked contrast to the situation for the (strict) emptiness problem for PFA which is undecidable under the even more severe restrictions of PFA with polynomial ambiguity, commutative matrices and input over a letter-bounded language as well as to the injectivity problem which is undecidable for PFA over letter-bounded languages. We provide a constructive nondeterministic algorithm to solve the cutpoint isolation problem, which holds even when the PFA is exponentially ambiguous. We also show that the problem is at least NP-hard and use our decision procedure to solve several related problems.
 \end{abstract}
 
 \section{Introduction}

Probabilistic finite automata (PFA) are an extension of classical nondeterministic finite automata (NFA) where transitions, for each state and letter, are represented as probability distributions. The PFA model was first introduced by Rabin \cite{Ra63}. %Motivation

There are a variety of classical problems for PFA. Let $\pfa$ denote a PFA, $\Sigma$ an alphabet and $\lambda \in [0, 1]$ a probability. The acceptance probability of $\pfa$ on a word $w \in \Sigma^*$ is denoted $f_{\mathcal{P}}(w)$. A central question is \emph{(strict) emptiness} of cutpoint languages: does there exist a finite input word $w$ for which $f_{\mathcal{P}}(w) \geq \lambda$ (or $f_{\mathcal{P}}(w) > \lambda$ for strict emptiness).  Another important problem is that of \emph{cutpoint isolation} --- to determine if $\lambda$ can be approached arbitrarily closely, i.e., for each $\epsilon > 0$, does there exist a word $w \in \Sigma$ such that $|f_{\mathcal{P}}(w) - \lambda| < \epsilon$ (or the converse, does there exist $\delta>0$ such that $|f_{\pfa}(w) - \lambda| \geq \delta$ for all $w \in \Sigma^*$)? The \emph{value}-$1$ problem is a special case of the cutpoint isolation when $\lambda = 1$ \cite{GO10}. In the \emph{injectivity problem} we must determine if $f_{\mathcal{P}}(w)$ is injective (i.e. do there exist two distinct words with the same acceptance probability?) In the \emph{$\lambda$-probability problem} we must determine if  there exist $w \in \Sigma^*$ such that $f_{\mathcal{P}}(w) = \lambda$. 

The emptiness problem is undecidable for rational matrices \cite{Paz}, even over a binary alphabet when the PFA has dimension $46$ \cite{BC03}, later improved to dimension $25$ \cite{Hir}. The injectivity problem for PFA is undecidable \cite{BCJ19}, even for polynomially ambiguous PFA \cite{Bell19}. 

The main focus of this paper is the cutpoint isolation problem. The authors of \cite{BMT77} show that the problem of determining if a given cutpoint is isolated (resp. if a PFA has any isolated cutpoint) is undecidable and this was shown to hold even for PFA with $420$ (resp. $2354$) states over a binary alphabet \cite{BC03}. The cutpoint isolation problem, in the special case where $\lambda = 1$ (the value-$1$ problem), is also known to be undecidable \cite{GO10}. The problem is especially interesting given the seminal result of Rabin that if a cutpoint $\lambda$ is isolated, then the cutpoint language associated with $\lambda$ is necessarily regular \cite{Ra63}. 

%Given the multitude of undecidability results for PFA, it is natural to consider restricted classes for which tractable results may be found. 

Most problems are undecidable for PFA and there exist very few algorithmic solutions \cite{GO10}. Various classes of restrictions on PFA are possible, related to the number of states, the alphabet size and whether one defines the PFA over the algebraic reals or the rationals. Recent work has studied PFA with finite, polynomial or exponential ambiguity (in terms of the underlying NFA) \cite{FR17}, PFA defined for restricted input words (e.g. those coming from bounded or letter-bounded languages) \cite{BHH13, BCJ19}, commutative PFA, where all transition matrices commute, for which cutpoint languages and non-free languages generated by such automata become commutative \cite{Bell19} or other structural restrictions on the PFA such as \#-acyclic automata, for which some problems become decidable \cite{GO10}, including the value-$1$ problem. Such \#-acyclic automata impose a restriction on the structure of the PFA (as we shall see, we only restrict the input words).

A natural restriction on PFA was studied in \cite{BHH13}, where input words of the PFA are restricted to be from a letter-bounded language (also known as a \emph{letter-monotonic language}) of the form $\mathcal{L} = a_1^*a_2^* \cdots a_\ell^*$ with distinct letters $a_i \in \Sigma$. This is analogous to a $1.5$-way PFA, whose read head may ``stay put'' on an input letter but never moves left. This may model a situation where we have some finite number of probabilistic events and we know that there is a fixed order and number of transitions between them, but with each event being applied an arbitrary number of times. The model is also related to ``promise problems'' whereby we restrict the decision question to a subset of possible inputs \cite{Go06}. Letter-bounded languages allow a natural and substantial extension to decision questions on a unary alphabet.

The emptiness and $\lambda$-probability problems for PFA on letter-bounded languages were shown to be undecidable for high (finite) dimensional matrices via an encoding of Hilbert's tenth problem on the solvability of Diophantine equations and Turakainen's method to transform weighted integer automata to probabilistic automata \cite{Tu69}. These undecidability results also hold for polynomially ambiguous PFA with commutative matrices \cite{Bell19}.

The authors of \cite{FR17} studied decision problems for PFA of various degrees of ambiguity. The degree of ambiguity (finite, polynomial or exponential) of a PFA is a structural property, giving an indication of the number of accepting runs for a given input word. The degree of ambiguity of automata is a well-known and well-studied property in automata theory \cite{WS91}. The authors of \cite{FR17} show that the emptiness problem for PFA remains undecidable even for polynomially ambiguous automata (quadratic ambiguity), show {\bf PSPACE}-hardness results for finitely ambiguous PFA and that emptiness is in {\bf NP} for the class of $k$-ambiguous PFA for every $k > 0$. The emptiness problem for PFA was later shown to be undecidable for linearly ambiguous automata \cite{DJ18}. 

\subsection{Our Contributions}

It is natural to consider the decidability of the cutpoint isolation problem for polynomially ambiguous PFA on letter-bounded or commutative languages, given that the (strict) emptiness problems for such automata are undecidable \cite{Bell19}. In the present paper we prove the surprising result that the cutpoint isolation problem is in fact \emph{decidable}, even if the PFA is exponentially ambiguous, matrices are non-commutative, and the input language is not just the letter-bounded language $a_1^* \cdots a_\ell^*$ but instead a more general letter-bounded  context-free language. The results are shown in Table~\ref{results1Tab}.

\begin{table}[h]
\footnotesize
\begin{center}
{\tabulinesep=1.2mm
\begin{tabu}{|c|c|c|c|}
\hline
\multirow{3}{*}{Problem} & \multirow{3}{*}{Polynomial ambiguity} & Letter-bounded & Polynomial ambiguity; \\  &  & CFL input; &  letter-bounded input; \\ & & Exponential ambiguity & commutative matrices \\  
\hline
\multirow{2}{*}{(Strict) Emptiness} & Undecidable \cite{Paz, FR17, DJ18} & \multirow{2}{*}{$\Longleftarrow$} & \multirow{2}{*}{Undecidable \cite{Bell19}}\\ & $\Longleftarrow$ & & \\
\hline
Cutpoint isolation & Undecidable \cite{BMT77, FR17} & \textbf{Decidable} & $\Longrightarrow$ \\
\hline
\end{tabu}
}
\end{center}
\caption{The decidability of problems under different restrictions on the PFA. The main result of this paper is shown in \textbf{boldface}. Symbol $\Longrightarrow$ denotes that decidability is implied by the decidability of the more general model; $\Longleftarrow$ denotes that undecidability is implied by the more restricted model.}\label{results1Tab}
\end{table}

The result is surprising since in order to solve the cutpoint isolation problem, we must solve two subproblems. Either the cutpoint $\lambda$ can be reached exactly (the $\lambda$-probability problem), or else it can only be approximated arbitrarily closely and is only reached exactly in some limit. As mentioned, the emptiness problem for cutpoint languages is undecidable for polynomially ambiguous PFA on letter-bounded languages, even when all matrices commute \cite{Bell19}. The proof of this result shows a construction of a PFA for which determining if a given $\lambda \in [0, 1]$ is ever reached (i.e., the $\lambda$-probability problem) is undecidable. This may at first seem to contradict the results of this paper, since the $\lambda$-probability problem is one of the two subproblems to be solved for cutpoint isolation. Why is there  no contradiction then? It comes from the fact that as the powers of matrices used in the PFA constructed in \cite{Bell19} increase, the PFA valuation tends towards the limit value $\lambda$. Therefore, this $\lambda$ is always non-isolated and hence the cutpoint isolation problem for such constructed PFA and $\lambda$ is decidable. However, determining if the PFA ever \emph{exactly} reaches $\lambda$ is undecidable. So, there is no contradiction with the results of this paper. Our main result is stated as follows.

\begin{thm}\label{mainthm}
The cutpoint isolation problem for probabilistic finite automata where inputs are constrained to a given letter-bounded context-free language is decidable. Moreover, if the cutpoint is isolated, then a separation bound $\epsilon > 0$ can be computed such that no input word's acceptance probability lies within $\epsilon$ of the cutpoint.
\end{thm}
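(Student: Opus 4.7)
The plan is to rewrite the acceptance probability along a Parikh-parameterised family of inputs as an explicit polynomial-exponential function, analyse the closure of its image, and decide membership of the cutpoint.

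I would start with a reduction to a parametric expression. By Ginsburg's theorem the Parikh image of $\mathcal{L}$ is an effectively computable semilinear subset of $\N^\ell$, so it suffices to treat one linear piece at a time and aggregate the separation bounds. On a linear piece parameterised by $\mathbf{t} \in \N^d$ with $n_i = c_i + \sum_k v_{k,i}\, t_k$, write $F(\mathbf{t}) = f_{\pfa}(a_1^{n_1(\mathbf{t})} \cdots a_\ell^{n_\ell(\mathbf{t})})$. Using the Jordan form of each stochastic matrix $M_{a_i}$, the entries of $M_{a_i}^{n_i}$ are finite sums of terms $\binom{n_i}{r}\mu^{n_i}$ with $\mu$ an eigenvalue of $M_{a_i}$. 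Stochasticity forces $|\mu|\le 1$, and, by Perron--Frobenius for non-negative matrices, the unit-modulus eigenvalues are roots of unity. After substituting the affine forms for $n_i$ and fixing residues of $\mathbf{t}$ modulo the l.c.m.\ of the relevant roots-of-unity orders, $F$ takes the canonical form
\[
F(\mathbf{t}) \;=\; \sum_\alpha P_\alpha(\mathbf{t}) \prod_{k=1}^d r_{\alpha,k}^{t_k},
\]
where each $P_\alpha$ is a polynomial with algebraic coefficients and each base $r_{\alpha,k}$ is a real algebraic number in $(0,1]$.

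I would then compute the closure $\overline{F(\N^d)}$ by a direction-by-direction cluster analysis. For any divergent sequence $\mathbf{t}^{(m)}$ in $\N^d$, passing to a subsequence one may assume that each coordinate either stabilises at a value (finitely many choices) or tends to infinity; let $I$ be the subset of diverging coordinates. Any term of $F$ with $r_{\alpha,k}<1$ for some $k\in I$ decays to zero, while the remaining terms assemble into a polynomial in $\mathbf{t}_I$ whose algebraic coefficients depend on the fixed bounded coordinates. Iterating the argument, the isolation question reduces, across finitely many choices of residue, direction, and bounded-coordinate values, to deciding whether $\lambda$ lies in the closure of the image of an algebraic real polynomial $R: \N^{d'} \to \R$, and to extracting a separation bound otherwise.

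The main obstacle is precisely this closure-membership problem for polynomial images. It is essential that closure membership is strictly weaker than exact membership $\lambda \in R(\N^{d'})$, which is already undecidable (Hilbert's tenth problem), exactly in line with the remark in the introduction that the $\lambda$-probability subproblem is undecidable here while isolation is not. Closure membership can be handled by equidistribution and diophantine approximation: Kronecker--Weyl-type density of orbits controls the approximability of $\lambda$ by the dominant monomials of $R$, while effective lower bounds on linear forms in logarithms (Baker) and root-separation bounds for algebraic numbers supply the explicit $\epsilon > 0$ when the cutpoint is isolated. Aggregating these effective bounds across all the finitely many cases arising in the previous stages yields the overall effective separation promised in the statement, completing the decision procedure; the algorithm is naturally nondeterministic since it guesses the linear piece, residue class, direction $I$, and bounded-coordinate tuple before verifying the closure-membership or the separation.
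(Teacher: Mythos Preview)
Your reduction through Parikh's theorem, the Jordan expansion, and the observation that unit-modulus eigenvalues of a stochastic matrix are roots of unity all match the paper. The genuine gap is the phrase ``stabilises at a value (finitely many choices)''. A non-diverging coordinate may stabilise at \emph{any} natural number; nothing you have written bounds it, so the ensuing case split is infinite and you never escape the undecidable exact-membership question $\lambda\in F(\N^d)$ that you yourself flag. The paper's engine is precisely this missing bound (Proposition~\ref{limitlem}): one writes $F=S_0+S_1$ where $S_0$ collects the terms in which every free variable carries a dominant eigenvalue and $S_1$ the rest. Because dominant eigenvalues of a stochastic matrix not only are roots of unity but have $1\times1$ Jordan blocks (Lemma~\ref{rootsLem}; this is the extra Perron--Frobenius fact you need), $S_0$ carries no polynomial factor in the free variables and hence takes only \emph{finitely many} values. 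If $\lambda$ lies outside this finite set at distance $\delta$, then once all free variables exceed a computable $C$ one has $|S_1|<\delta/2$, forcing $F\neq\lambda$; so at least one free variable is below $C$, and the recursion becomes finitely branching. Your outline never derives such a $C$.

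This also explains why your endgame is off target. The ``remaining terms'' after sending $\mathbf t_I\to\infty$ are not a non-trivial polynomial in $\mathbf t_I$ --- they are constant in those coordinates, again by the $1\times1$ Jordan-block fact --- and the set of limit values is finite and explicitly computable. The decision at each stage is therefore a finite equality test against algebraic numbers, not closure-membership for a polynomial image, and neither Kronecker--Weyl density nor Baker's bounds on linear forms in logarithms play any role. Invoking them here is a red herring and does not supply the missing variable bound. As a minor point, the bases $r_{\alpha,k}$ need not be real: subdominant eigenvalues of a rational stochastic matrix may be non-real algebraic numbers of modulus strictly less than~$1$.
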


The proof of Theorem~\ref{mainthm} is found in Section~\ref{algSec}. Our proof technique for showing the decidability of cutpoint isolation for PFA on letter-bounded languages uses the following crucial facts. If a PFA over a letter-bounded context-free language  can approach some given cutpoint $\lambda$ arbitrarily closely, then the PFA can reach $\lambda$ \emph{exactly} if we allow a subset of the matrices to be taken to one of their `limiting powers'. We use the property that each limiting power (of which there may be finitely many) of a stochastic matrix can be computed in polynomial time (see Lemma~\ref{limitsCompute}), as well as a crucial property from linear algebra that dominant eigenvalues (those of strictly largest magnitude) of a stochastic matrix are necessarily of magnitude $1$, roots of unity and they have equal geometric and algebraic multiplicities (see Lemma~\ref{rootsLem}). Since the input words of the PFA come from a letter-bounded CFL, we also use the fact that a letter-bounded language is context-free if and only if its Parikh image is a stratified semilinear set (see Proposition~\ref{parikhProp}). 

The combination of these ideas allows us to derive Algorithm~\ref{thealg}, which works as follows. We initially set all variables as free (rather than fixed), and compute the Parikh image $p(\mathcal{L})$ of the given letter-bounded CFL $\mathcal{L}$. Using the fact that $p(\mathcal{L})$ is a semilinear set, we compute which letters can be taken to arbitrarily high powers and which letters have fixed finite values. We then use the technical Proposition~\ref{limitlem} which states that if we can reach $\lambda$ then we can either do so by setting all free variables to an infinite power, or else we can compute an integer $C$ such that the value of one of free variables must be less than $C$. We then either set all free variables as $\omega$ in the first case, or nondeterministically choose one of the free variables and assign it a value less than $C$ in the latter case. In the second case we also update the semilinear set and repeat the above procedure until no free variables remain. Finally, we verify that the PFA has exactly the value $\lambda$ for the chosen values of the variables.

The crucial Proposition~\ref{limitlem} is somewhat technical, but relies on splitting a product of stochastic matrices into a summation involving dominant and subdominant eigenvalues (a subdominant eigenvalue being one with magnitude strictly less than $1$) and then applying the spectral decomposition or Jordan normal form of each stochastic matrix in order to derive the constant $C$ which bounds the value of one of free variables. 

Combining our proof technique with a result of Rabin \cite{Ra63}, we derive the following result.
\begin{customcor}{\ref{newCor1}}
The emptiness problem is decidable for probabilistic finite automata on letter-bounded context-free languages when the cutpoint is isolated.
\end{customcor}
The undecidability of the emptiness problem for PFA over letter-bounded inputs shown in \cite{Bell19} therefore only applies when the cutpoint is non-isolated.

The provided algorithm is nondeterministic in nature although we do not have an upper bound on its complexity. We can however provide the following lower bound via an adaptation of a proof technique from \cite{Bell19} which proved the NP-hardness of the injectivity problem for linearly ambiguous three-state probabilistic finite automata over letter-bounded languages, the proof of which is given in the appendix.

\begin{thm}\label{npthm}
Cutpoint isolation is NP-hard for $3$-state PFA on letter-bounded languages.
\end{thm}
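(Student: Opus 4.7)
The plan is to give a polynomial-time reduction from the NP-complete Subset Sum problem. Given an instance $(a_1,\ldots,a_n;t) \in \N^{n+1}$, I would fix the alphabet $\Sigma = \{x_1,\ldots,x_n\}$ and take as input language the finite set
\[
\mathcal{L} = \{\, x_1^{\epsilon_1} x_2^{\epsilon_2} \cdots x_n^{\epsilon_n} : \epsilon_i \in \{0,1\} \,\} \;\subseteq\; x_1^* x_2^* \cdots x_n^*,
\]
which is regular and hence a letter-bounded context-free language, fitting the hypothesis of the theorem.

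Next I would build the $3$-state PFA $\pfa$ with initial distribution $(1,0,0)$, accepting set $\{2\}$, and, for each letter $x_i$, the upper-triangular stochastic transition matrix
\[
M_i = \begin{pmatrix} 2^{-a_i} & 1-2^{-a_i} & 0 \\ 0 & 1 & 0 \\ 0 & 0 & 1 \end{pmatrix}.
\]
These matrices pairwise commute, and a short induction gives
\[
f_{\pfa}(x_1^{\epsilon_1}\cdots x_n^{\epsilon_n}) \;=\; 1 \;-\; \prod_{i=1}^n 2^{-\epsilon_i a_i} \;=\; 1 \;-\; 2^{-\sum_i \epsilon_i a_i}.
\]
Choosing the cutpoint $\lambda := 1 - 2^{-t}$, injectivity of the map $s \mapsto 1 - 2^{-s}$ on $\N$ forces $f_\pfa(w) = \lambda$ if and only if $\sum_i \epsilon_i a_i = t$; that is, if and only if $\{i : \epsilon_i = 1\}$ solves the Subset Sum instance.

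The concluding step is the elementary observation that, since $\mathcal{L}$ is finite, $\{f_\pfa(w) : w \in \mathcal{L}\}$ is a finite subset of $[0,1]$, so the cutpoint $\lambda$ is non-isolated precisely when it is attained exactly by some $w \in \mathcal{L}$. Combining the two equivalences yields the desired polynomial-time reduction: the Subset Sum instance admits a solution if and only if $\lambda$ is non-isolated. This establishes NP-hardness of the cutpoint isolation problem (a decision question which is equivalent, from the hardness point of view, to its ``is $\lambda$ approachable?'' complement). No real obstacle is anticipated: the entries $2^{-a_i}$ have bit-length $O(a_i)$, each $M_i$ is manifestly stochastic, and $\mathcal{L}$ is visibly letter-bounded, so the construction is clearly polynomial-time in the size of the Subset Sum instance.
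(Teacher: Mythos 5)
Your overall strategy --- reduce from Subset Sum, use a finite letter-bounded language so that ``$\lambda$ is non-isolated'' collapses to ``$\lambda$ is attained exactly'', and encode the chosen subset by a choice of letters --- is exactly the strategy of the paper's proof. However, your reduction is not polynomial-time, and this is fatal to the NP-hardness claim. In a Subset Sum instance the integers $a_1,\dots,a_n,t$ are given in binary, so the input size is $\Theta(\sum_i \log a_i + \log t)$. Your transition matrices contain the rationals $2^{-a_i}$ and your cutpoint is $1-2^{-t}$; written as explicit rationals these have denominators $2^{a_i}$ and $2^{t}$, i.e.\ bit-length $\Theta(a_i)$ and $\Theta(t)$, which is \emph{exponential} in the input size. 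Your closing remark that the entries have bit-length $O(a_i)$ is true but is precisely the problem: $O(a_i)$ is not polynomial in $O(\log a_i)$. Nor can this be rescued by switching to unary-encoded Subset Sum, since that variant is solvable in polynomial time by dynamic programming and hence not NP-hard.

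The root cause is that your absorbing-chain gadget accumulates the selected numbers \emph{multiplicatively}, giving $f_{\pfa}(w)=1-2^{-\sum_i\epsilon_i a_i}$, which forces an exponential encoding of the $a_i$ into the transition probabilities. The paper's proof instead uses upper-triangular $3\times 3$ matrices with entries such as $x_i/(x_i+1)$ (bit-length $O(\log x_i)$) whose products accumulate the selected numbers \emph{additively} in the $(1,2)$ entry, so that $f_{\pfa}(w)$ is an affine function of $\sum_i \epsilon_i x_i$ over a common denominator of polynomial bit-length, and the cutpoint is $T$ over that same denominator. Everything else in your argument --- the finite letter-bounded language, the equivalence of attainment and non-isolation on a finite language, the separation bound in the no-instance case, and the (shared with the paper) convention of reducing to the complement --- is sound; only the arithmetic gadget needs to be replaced by an additive one, which is exactly where the third state earns its keep.
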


Our procedure also allows us to answer some equivalent problems (in Section~\ref{extensions}). Given a PFA, $\lambda \in [0, 1]$ and a fixed number of alternations between probabilistic events $k \in \N$, determine if $\lambda$ is isolated (i.e. we may apply each probabilistic event an unbounded number of times but only alternate between different events a fixed number of times). We also prove the value-$1$ problem is decidable over letter-bounded context-free language inputs.

\section{Preliminaries}

{\bf Notation.\mbox{ }} We denote by $\mathbb{F}^{n \times n}$ the set of all $n \times n$ matrices over some field $\mathbb{F}$. We will primarily be interested in rational matrices. We define the spectrum (set of eigenvalues) of $A \in \R^{n \times n}$ as $\sigma(A) = \{\lambda_{1}, \ldots, \lambda_{n}\}$ arranged in monotonically nonincreasing order, i.e. $|\lambda_{i}| \geq |\lambda_{j}|$ for all $1 \leq i < j \leq n$ and we define $\hat{\sigma}(A) \subseteq \sigma(A)$ as the set of eigenvalues of $A$ of absolute value $1$. We call eigenvalues $\hat{\sigma}(A)$ \emph{dominant eigenvalues} and eigenvalues $\sigma(A) \setminus \hat{\sigma}(A)$ \emph{subdominant eigenvalues}. 

Given $A = (a_{ij}) \in\mathbb{F}^{m\times m}$ and $B\in\mathbb{F}^{n\times n},$ we define the direct sum $A\oplus B$ of $A$ and $B$ by: $
A\oplus B=
\left[\begin{array}{@{}c|l@{}}
A & \0_{m,n}\\
\hline
\0_{n,m} & B
\end{array}\right],
$
where $\0_{n,m}$ is the zero $n\times m$ matrix.

We use a nonstandard form of \emph{Dirac bra-ket notation} in several calculations, to simplify the notation in some complex formulae. If $u = (u_1, \ldots, u_n)^\top \in \C^n$ is a column vector, then we write $\ket{u} = u$ and $\bra{u} = u^\top$ where $u^\top$ denotes the transpose of $u$, i.e., $\bra{u} = (u_1, \ldots, u_n)$. Note that Dirac bra-ket notation ordinarily defines that $\bra{u} = u^*$ where $u^*$ denotes the \emph{conjugate} transpose of $u$, however we will not use this notion at any point. Note that $\ket{u}\bra{v}$ is just a rank 1 matrix $u^\top v$. We use $\bra{e_i}$ and $\ket{e_i}$ to denote the $i$'th basis row/column vector respectively. 

\noindent{\bf Probabilistic Finite Automata (PFA).\mbox{ }} A PFA $\mathcal{P}$ with $n$ states over an alphabet $\Sigma$ is defined as $\mathcal{A}=(\bra{u}, \{M_a|a \in \Sigma\}, \ket{v})$ where $\bra{u} \in \mathbb{R}^n$ is the initial probability distribution; $\ket{v} \in \{0, 1\}^n$ is the final state vector and each $M_a \in \mathbb{R}^{n \times n}$ is a (row) stochastic matrix. For a word $w = w_1w_2\cdots w_k \in \Sigma^*$, we define the acceptance probability $f_{\pfa}: \Sigma^* \to \mathbb{R}$ of $\pfa$ as:
\[
f_{\pfa}(w) = \bra{u} M_{w_1}M_{w_{2}} \cdots M_{w_k} \ket{v},
\]
which denotes the acceptance probability of $w$.\footnote{Some authors interchange the order of ${u}$ and ${v}$ and use column stochastic matrices, although the two definitions are trivially isomorphic.} 

For any $\lambda \in [0, 1]$ and PFA $\pfa$ over alphabet $\Sigma$, we define a cutpoint language to be: $L_{\geq \lambda}(\pfa) = \{w \in \Sigma^* | f_{\pfa}(w) \geq \lambda\}$, and a strict cutpoint language $L_{> \lambda}(\pfa)$ by replacing $\geq$ with $>$. The (strict) emptiness problem for a cutpoint language is to determine if $L_{\geq \lambda}(\pfa) = \emptyset$ (resp. $L_{> \lambda}(\pfa) = \emptyset$). Our main focus is on the \emph{cutpoint isolation problem}, now defined. 

\begin{prob}[Cutpoint isolation]
	Given a PFA $\mathcal{P}$ and cutpoint $\lambda \in [0, 1]$, determine if for each $\epsilon > 0$ there exists some $w \in \Sigma^*$ such that $|f_{\pfa}(w) - \lambda| < \epsilon$.
\end{prob}

Let $\Sigma = \{a_1, a_2, \ldots, a_\ell\}$ be an alphabet with $\ell>0$ distinct letters. A language $\mathcal{L}$ is called \emph{letter-bounded} if $\mathcal{L} \subseteq a_1^*a_2^*\cdots a_\ell^*$. If $\mathcal{L}$ is letter-bounded and also a  context-free language, then it is called a letter-bounded  context-free language. We are interested in cutpoint isolation for PFA whose inputs come from a given letter-bounded context-free language.

For a letter-bounded language $\mathcal{L} \subseteq a_1^*a_2^* \cdots a_\ell^*$, define its \emph{Parikh image}\footnote{In general, the Parikh image of $\mathcal{L} \subseteq\Sigma^*$ is defined as $p(\mathcal{L}) = \{\, (|w|_{a_1},\ldots,|w|_{a_\ell})\ :\ w\in \mathcal{L}\,\}$ where $|w|_{a_i}$ denotes the number of occurrences of letter $a_i$ in word $w$.} as
\[
p(\mathcal{L}) = \{\, (k_1,\ldots,k_\ell)\ :\ a_1^{k_1} a_2^{k_2} \cdots a_\ell^{k_\ell}\in \mathcal{L}\,\}.
\]
Recall that a subset $Q\subseteq \N^\ell$ is called \emph{linear} if there are vectors $q_0,q_1,\ldots,q_r\in \N^\ell$ such that
\[
Q = \{\,q_0 + t_1q_1 + \cdots +t_r q_r\ :\ t_1,\ldots,t_r\in\N\,\}.
\]
We say that a linear set $Q$ is \emph{stratified} if for each $i\geq 1$ the vector $q_i$ has at most two nonzero coordinates, and for any $i,j\geq 1$ if both $q_i$ and $q_j$ have two nonzero coordinates, $i_1<i_2$ and $j_1<j_2$, respectively, then their order is \emph{not} $i_1<j_1<i_2<j_2$, i.e., they are not interlaced.
A finite union of linear sets is called a \emph{semilinear set}, and a finite union of stratified linear sets is called a \emph{stratified semilinear set}.

We will need the following classical fact about context-free languages.
\begin{prop}\label{parikhProp}
	If $\mathcal{L}$ is a context-free language, then its Parikh image $p(\mathcal{L})$ is a semilinear set that can be effectively constructed from the definition of $\mathcal{L}$ \cite{Par66}.
\end{prop}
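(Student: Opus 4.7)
\medskip
\noindent\textbf{Proof proposal for Proposition~\ref{parikhProp}.}
The plan is to follow the standard route via Chomsky normal form and a decomposition of derivation trees. First I would assume, without loss of generality, that a CFG $G=(V,\Sigma,P,S)$ generating $\mathcal{L}$ has been put into Chomsky normal form, and that all nonterminals are both reachable from $S$ and productive. For a derivation tree $T$ of $G$, write $p(T)\in\N^{|\Sigma|}$ for the Parikh vector of its yield; so $p(\mathcal{L})=\{\,p(T)\;:\;T\text{ is an $S$-rooted derivation tree of $G$}\,\}$, and the task reduces to showing this set is semilinear.

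Next I would introduce two classes of trees. A \emph{skeleton tree} is a complete derivation tree (root labelled $S$, yield in $\Sigma^*$) in which no root-to-leaf path repeats a nonterminal; since each path has length at most $|V|$, there are only finitely many skeleton trees, all of bounded size and effectively enumerable from $G$. A \emph{pump} at nonterminal $A$ is a partial derivation tree whose root is $A$, one distinguished leaf is $A$, all other leaves lie in $\Sigma$, and no internal root-to-leaf path other than the distinguished one repeats a nonterminal; again, for each $A$ there are only finitely many pumps at $A$, all effectively enumerable. Each pump $\pi$ at $A$ carries a Parikh vector $p(\pi)\in\N^{|\Sigma|}$ of its terminal yield.

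The core combinatorial step is then: any $S$-rooted derivation tree $T$ can be obtained from some skeleton tree $T_0$ by repeatedly inserting pumps at their matching nonterminals, so that $p(T)=p(T_0)+\sum_\pi t_\pi\, p(\pi)$ for some nonnegative integers $t_\pi$; and conversely every such insertion into a skeleton tree yields a valid derivation tree of $G$. The forward direction is the essential observation and I would prove it by induction on the size of $T$: scan each path from the root downward; at the first repetition of a nonterminal $A$ on a path, excise the subtree rooted at the shallower $A$ and keep only the subtree at the deeper $A$, recording the excised piece (with the deeper $A$ as its distinguished leaf) as a pump; iterate until no path has a repeated nonterminal. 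The reverse direction is immediate since a pump at $A$ can be grafted onto any leaf labelled $A$ in a partial derivation tree without affecting validity. This is the step I expect to require the most care, because one must verify that pumps can be applied in any order and any multiplicity without interfering.

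Once this decomposition is established, fix a skeleton $T_0$ and let $\{\pi_1,\dots,\pi_r\}$ be the (finite, effectively computable) list of all pumps whose root nonterminal actually appears in $T_0$ or in some $\pi_j$. Then the Parikh image contributed by derivations with skeleton $T_0$ is precisely the linear set
\[
Q_{T_0}=\{\,p(T_0)+t_1 p(\pi_1)+\cdots+t_r p(\pi_r)\;:\;t_1,\dots,t_r\in\N\,\}.
\]
Taking the union over the finitely many skeleton trees $T_0$ yields
\[
p(\mathcal{L})=\bigcup_{T_0} Q_{T_0},
\]
a finite union of linear sets and hence a semilinear set. Each $p(T_0)$ and each $p(\pi_j)$ is computed directly from the grammar, so the semilinear representation is effectively constructible from $G$, proving the proposition.
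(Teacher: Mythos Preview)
The paper does not supply its own proof of this proposition; it is quoted as the classical Parikh theorem with a citation to \cite{Par66}, so there is nothing in the paper to compare your argument against. That said, your outline has a genuine gap.

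The forward decomposition (every derivation tree is a skeleton plus pumps) is fine. The trouble is the converse. You assert that ``every such insertion into a skeleton tree yields a valid derivation tree'' and then take $Q_{T_0}$ to be the \emph{full} linear set on base $p(T_0)$ with periods $p(\pi_1),\dots,p(\pi_r)$, where the $\pi_j$ range over all pumps whose root lies in the closure. But a pump $\pi_j$ rooted at $A$ can be grafted only once $A$ actually occurs somewhere in the current tree; if $A$ does not occur in $T_0$ itself but only inside some other pump $\pi_i$, then you cannot take $t_j>0$ while $t_i=0$. Hence $Q_{T_0}$ may strictly overshoot $p(\mathcal{L})$, and the union $\bigcup_{T_0}Q_{T_0}$ need not equal $p(\mathcal{L})$.

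Concretely, take the CNF grammar $S\to a\mid SA$, $A\to b\mid AB$, $B\to c$. The unique skeleton tree is $S\to a$ with Parikh vector $(1,0,0)$. The only pump at $S$ has yield $b$ and introduces $A$; the only pump at $A$ has yield $c$ and introduces $B$. Your closure therefore contains both pumps, and $Q_{T_0}=\{(1,m,n):m,n\ge 0\}$. But $(1,0,1)\notin p(\mathcal{L})$: every occurrence of $c$ in a derived word forces at least one $b$.

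The standard repair (and this is what Parikh, and later Goldstine and Kozen, actually do) is to index not by skeleton trees but by the \emph{set} $U\subseteq V$ of nonterminals used in a derivation. For each feasible $U\ni S$, choose as base a derivation tree that uses \emph{every} nonterminal of $U$ at least once (such a tree of bounded size exists and is effectively findable), and take as periods only those pumps all of whose nonterminals lie in $U$. Now every period pump can be grafted directly onto the base tree, so the linear set is contained in $p(\mathcal{L})$, and the forward decomposition still gives the reverse inclusion. With this change your argument goes through.
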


\begin{remark}
	There is a nice characterization of the letter-bounded context-free languages. Namely, a letter-bounded language $\mathcal{L} \subseteq a_1^*a_2^* \cdots a_\ell^*$ is context-free if and only if $p(\mathcal{L})$ is a stratified semilinear set \cite{Gins66,GS66}.
\end{remark}

\noindent{\bf Probabilistic Finite Automata on Letter-Bounded Languages.} Let $A_1, \ldots, A_\ell \in \mathbb{Q}^{n \times n}$ be row stochastic matrices. Let $u \in \mathbb{Q}^{n}$ be a stochastic vector (the initial vector) and $v \in \{0, 1\}^n$ (the final state vector).
Let $\mathcal{L} \subseteq a_1^*a_2^* \cdots a_\ell^*$ be a letter-bounded context-free language, and let $\lambda \in [0, 1]$ be a cutpoint for which we want to decide if it is isolated or not, that is, whether $\lambda$ belongs to the closure of $ \{\, \bra{u} A_1^{k_1} A_2^{k_2} \cdots A_\ell^{k_\ell} \ket{v}\ :\ a_1^{k_1} a_2^{k_2} \cdots a_\ell^{k_\ell}\in \mathcal{L}\,\}$.

If $\lambda$ is not isolated, then there are two scenarios: either there exists $k_1, k_2, \ldots, k_\ell \in\N$ such that $\bra{u} A_1^{k_1} A_2^{k_2} \cdots A_\ell^{k_\ell}\ket{v} = \lambda$,
or else $\lambda$ is never reached but only approached arbitrarily closely. In the second case there is a sequence of tuples $\{(k^m_1, k^m_2, \ldots, k^m_\ell)\}_{m=1}^\infty$ such that
\[
\lambda = \lim_{m\to\infty} \bra{u} A_1^{k^m_1} A_2^{k^m_2} \cdots A_\ell^{k^m_\ell}\ket{v}
\]
and, furthermore, for every $t\in\{1,\ldots,\ell\}$, either $k^m_t=k^1_t$ for all $m\geq 1$, i.e.\ $k^m_t$ is fixed, or $k^m_t$ is strictly increasing and $A_t^{k^m_t}$ converges to a limit as $m\to\infty$. It follows that if $\lambda$ is not isolated, then there exists a choice of variables $k_1, k_2, \ldots, k_\ell \in \N\cup \{\omega\}$ such that
\[
\lambda \in \bra{u} A_1^{k_1} A_2^{k_2} \cdots A_\ell^{k_\ell}\ket{v}.
\]
Note that if $k_t=\omega$, then $A_t^\omega$ may be a finite \emph{set} of limits (see Lemma \ref{limitsCompute} below for a detailed explanation).
In this case we substitute all limits of $A_t^\omega$ in the above formula, and so $\bra{u}  A_1^{k_1} A_2^{k_2} \cdots A_\ell^{k_\ell}\ket{v}$ also becomes a finite set\footnote{In the case if all $k_t$'s are finite or if all limits are unique, we identify the number $\bra{u} A_1^{k_1} A_2^{k_2} \cdots A_\ell^{k_\ell}\ket{v}$ with the one element set $\{\bra{u} A_1^{k_1} A_2^{k_2} \cdots A_\ell^{k_\ell}\ket{v}\}$.}.

\noindent{\bf Algebraic numbers.} A complex number $\alpha$ is \emph{algebraic} if it is a root of a polynomial $p\in\Z[x]$. The \emph{defining polynomial} $p_\alpha\in\Z[x]$ for $\alpha$ is the unique polynomial of least degree with positive leading coefficient such that the coefficients of $p_\alpha$ do not have a common factor and $p_\alpha(\alpha)=0$. The \emph{degree} and \emph{height} of $\alpha$ are defined to be that of $p_\alpha$.

In order to do computations with algebraic numbers we use their standard representations. Namely, an algebraic number can be represented by its defining polynomial and a sufficiently good complex rational approximation. More precisely, $\alpha$ will be represented by a tuple $(p_\alpha,a,b,r)$, where $p_\alpha\in\Z[x]$ is the defining polynomial for $\alpha$ and $a,b,r\in\Q$ are such that $\alpha$ is the unique root of $p_\alpha$ inside the circle in $\mathbb{C}$ with centre $a+bi$ and radius $r$. As shown in \cite{Mi76}, if $\alpha\neq \beta$ are roots of $p\in\Z[x]$, then $|\alpha-\beta| > \dfrac{\sqrt{6}}{d^{(d+1)/2}H^{d-1}}$, where $d$ and $H$ are the degree and height of $p$, respectively. So, if we require $r$ to be smaller than half of this bound, the above representation is well-defined.

Let $||\alpha||$ be the size of the standard representation of $\alpha$, that is, the total bit size of $a,b,r$ and the coefficients of $p_\alpha$. It is well-known fact that for given algebraic numbers $\alpha$ and $\beta$, one can compute $1/\alpha$, $\bar{\alpha}$ and $|\alpha|$  in time polynomial in $||\alpha||$, and one can compute $\alpha+\beta$ and $\alpha\beta$ and decide whether  $\alpha=\beta$ in time polynomial in $||\alpha|| + ||\beta||$. Moreover, for a \emph{real} algebraic $\alpha$, deciding whether $\alpha>0$ can be done in time polynomial in $||\alpha||$. Finally, there is a polynomial time algorithm that for a given $p\in\Z[x]$ computes the standard representations of all roots of~$p$. For more information on efficient algorithmic computations with algebraic numbers the reader is referred to \cite{BPR06,Cohen93,HHHK05,Pan96}.

\noindent {\bf Spectral decomposition and Jordan normal forms.} We will use both the \emph{spectral decomposition theorem} and the \emph{Jordan normal form} of stochastic matrices in later proofs. For background, see \cite{fried}.

Let $A_i \in \mathbb{Q}^{n \times n}$ be a matrix (we use notation $A_i$ since it will prove useful in the proof of Proposition~\ref{limitlem}), and let $\{\lambda_{i,1},\ldots,\lambda_{i,n_i}\}$ be the eigenvalues
of $A_i$ listed according to their \emph{geometric} multiplicities\footnote{Note that $n_i$ is the number of linearly independent eigenvectors of $A_i$ or the number of Jordan blocks in the Jordan normal form of $A_i$. The matrix $A_i$ is diagonalizable if and only if $n_i = n$. Jordan normal forms are unique up to permutations of the Jordan blocks.}. Then $A_i$ can be written in \emph{Jordan normal form} $A_i = S_i^{-1} (J_{\ell_{i,1}}(\lambda_{i, 1}) \oplus \cdots \oplus J_{\ell_{i,n_{i}}}(\lambda_{i, n_{i}})) S_i$, where $S_i$ is an invertible matrix (det$(S_i) \neq 0$) and $J_{\ell_{i,j}}(\lambda_{i, j})$ is a $\ell_{i,j} \times \ell_{i,j}$ \emph{Jordan block} for $1 \leq j \leq n_i \leq n$, with $n_{i}$ the number of Jordan blocks of $A_i$ and $\ell_{i,j}$ the size of the Jordan block corresponding to eigenvalue $\lambda_{i, j}$, such that $\ell_{i,1} + \cdots + \ell_{i,n_{i}} = n$. Jordan block $J_{\ell_{i,j}}(\lambda_{i, j})$ corresponds to the $j^{\textnormal{th}}$ eigenvalue $\lambda_{i, j}$ of $A_i$ and has the form:
\[
J_{\ell_{i,j}}(\lambda_{i, j}) = \begin{pmatrix}\lambda_{i, j} & 1 & 0 & \cdots & 0 \\ 0 & \lambda_{i, j} & 1 & \cdots & 0 \\ 0 & 0 & \lambda_{i, j} & \cdots & 0 \\ \vdots & \vdots & \vdots & \ddots & \vdots \\ 0 & 0 & 0 & \cdots & \lambda_{i, j} \end{pmatrix} \in \mathbb{C}^{\ell_{i,j} \times \ell_{i,j}}
\]
The matrix $S_i$ contains the \emph{generalised} eigenvectors of $A_i$. Noting that $\binom{x}{y} = 0$ if $y > x$, we now see that
\begin{eqnarray}
& & J_{\ell_{i,j}}(\lambda_{i, j})^{k_i} = \begin{pmatrix}\lambda_{i, j}^{k_i} & \binom{k_i}{1}\lambda_{i, j}^{k_i-1} & \binom{k_i}{2}\lambda_{i, j}^{k_i-2} & \cdots & \binom{k_i}{\ell_{i,j}-1}\lambda_{i, j}^{k_i-(\ell_{i,j}-1)} \\ 0 & \lambda_{i, j}^{k_i} & \binom{k_i}{1}\lambda_{i, j}^{k_i-1} & \cdots & \binom{k_i}{\ell_{i,j}-2}\lambda_{i, j}^{k_i-(\ell_{i,j}-2)} \\ 0 & 0 & \lambda_{i, j}^{k_i} & \cdots & \binom{k_i}{\ell_{i,j}-3}\lambda_{i, j}^{k_i-(\ell_{i,j}-3)} \\ \vdots & \vdots & \vdots & \ddots & \vdots \\ 0 & 0 & 0 & \cdots & \lambda_{i, j}^{k_i} \end{pmatrix} \in \mathbb{C}^{\ell_{i,j}\times \ell_{i,j}} \nonumber \\
& = & \sum_{0 \leq m \leq \ell_{i,j}-1} \lambda_{i, j}^{k_i-m} \binom{k_i}{m}\left( \sum_{1 \leq p \leq \ell_{i,j}-m} \ket{e_p}\bra{e_{m+p}}\right) \label{jordanform}
\end{eqnarray}

The \emph{spectral decomposition} of a matrix is a special case of the Jordan normal form. Namely, any \emph{diagonalizable} matrix $A_i \in \mathbb{Q}^{n \times n}$ can be written as
\begin{equation}\label{spectralform}
A_i = S_i^{-1} (\lambda_{i, 1} \oplus \cdots \oplus \lambda_{i, n}) S_i = \sum_{j = 1}^{n} \lambda_{i,j} \ket{v_{i,j}}\bra{u_{i,j}},
\end{equation}
where $\sigma(M) = \{\lambda_{i,1}, \ldots, \lambda_{i,n}\}$ is the set of eigenvalues of $A_i$, $\ket{v_{i,j}}$ is the $j$'th column of $S_i^{-1}$ and $\bra{u_{i,j}}$ is the $j$'th row of $S_i$. Thus we have $A_i^k = \sum_{j = 1}^{n} \lambda^k_{i,j} \ket{v_{i,j}}\bra{u_{i,j}}$.

We will also require the following technical lemma concerning the dominant eigenvalues of stochastic matrices.

\begin{lem}[{\cite[Theorem 6.5.3]{fried}}]\label{rootsLem}
Let $\lambda$ be a dominant eigenvalue of a stochastic matrix $A \in \R^{n \times n}$. Then $\lambda$ is a root of unity of order no more than $n$. Moreover, the geometric multiplicity of $\lambda$ is equal to its algebraic multiplicity. In other words, the Jordan blocks that correspond to $\lambda$ have size $1\times 1$.
\end{lem}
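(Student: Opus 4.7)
The plan is to establish the two assertions separately. Both rest on the elementary observation that every power $A^k$ is itself row-stochastic, so each entry of $A^k$ lies in $[0,1]$, and in particular $\{A^k\}_{k\geq 1}$ is a bounded sequence in $\R^{n\times n}$.

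For the Jordan-block claim I would argue by contradiction. Fix a Jordan decomposition $A = S^{-1}\bigl(J_{\ell_1}(\lambda_1) \oplus \cdots \oplus J_{\ell_{n'}}(\lambda_{n'})\bigr) S$, so that $SA^kS^{-1} = J_{\ell_1}(\lambda_1)^k \oplus \cdots \oplus J_{\ell_{n'}}(\lambda_{n'})^k$; since $S$ and $S^{-1}$ are fixed invertible matrices, boundedness of $\{A^k\}$ forces boundedness of each individual block $J_{\ell_j}(\lambda_j)^k$. If some dominant eigenvalue $\lambda$ (so $|\lambda|=1$) possessed a Jordan block of size $\ell \geq 2$, then by Equation~\eqref{jordanform} the $(1,\ell)$-entry of $J_\ell(\lambda)^k$ is $\binom{k}{\ell-1}\lambda^{k-\ell+1}$, whose modulus $\binom{k}{\ell-1}$ grows polynomially in $k$ - a contradiction. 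Hence every Jordan block attached to a dominant eigenvalue is $1 \times 1$, which is the same as saying that the geometric and algebraic multiplicities of each dominant eigenvalue coincide.

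For the root-of-unity claim, boundedness alone is insufficient (an irrational rotation on the unit circle is bounded without being a root of unity), so I would bring in the non-negativity of $A$ via Perron--Frobenius theory. After a permutation of indices, $A$ can be placed in block upper-triangular form whose diagonal blocks correspond to the strongly connected components of the directed graph of $A$, and the spectrum of $A$ is the union of the spectra of these diagonal blocks. Only the essential (closed) SCCs can contribute modulus-$1$ eigenvalues, since the diagonal blocks attached to transient SCCs are strictly sub-stochastic and hence have spectral radius strictly less than $1$. The restriction of $A$ to an essential SCC of size $m \leq n$ is an irreducible stochastic matrix, and the Perron--Frobenius theorem for irreducible non-negative matrices states that its peripheral spectrum consists precisely of the $h$-th roots of unity for some period $h \leq m \leq n$, which yields the claim.

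The main obstacle is the second part: bounding the order of the root of unity requires genuine input from Perron--Frobenius theory - specifically, the fact that the peripheral spectrum of an irreducible non-negative matrix forms a cyclic group of roots of unity whose order is at most the matrix dimension. The first part, by contrast, is essentially a one-line calculation with the Jordan-power formula already displayed in Equation~\eqref{jordanform}.
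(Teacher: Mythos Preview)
The paper does not prove this lemma; it is quoted verbatim as \cite[Theorem~6.5.3]{fried} and used as a black box throughout. So there is nothing in the paper to compare against, but your sketch is correct and is essentially the textbook argument. For the Jordan-block half, boundedness of $\{A^k\}$ (all entries lie in $[0,1]$) transfers to each Jordan block after conjugation by the fixed change-of-basis matrix, and Eqn~(\ref{jordanform}) indeed shows that a block $J_\ell(\lambda)$ with $|\lambda|=1$ and $\ell\geq 2$ has powers with entries of modulus $\binom{k}{\ell-1}\to\infty$. For the root-of-unity half, the SCC block upper-triangular form, the fact that the diagonal block of a transient component is irreducible and strictly sub-stochastic (so has spectral radius $<1$ by the Perron--Frobenius subinvariance argument), and the Perron--Frobenius description of the peripheral spectrum of an irreducible stochastic block as the $h$-th roots of unity for some $h\leq m\leq n$ together give exactly the stated conclusion. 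The only point worth tightening in a formal write-up is the one-vertex transient SCC case (trivially handled, since its $1\times 1$ diagonal entry is then strictly less than $1$).
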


%\noindent\textit{Remark.} Here is a simple proof of the last statement of Lemma \ref{rootsLem}. Let $J$ be the Jordan normal form of $A$ and $S$ be a transformation matrix such that $A=S^{-1}JS$. Then we have $J^k=SA^kS^{-1}$. Suppose there is an eigenvalue $\lambda$ of $A$ which is a root of unity but whose Jordan block has size larger than $1\times 1$. Then $J^k$ has at least one unbounded entry as $k\to\infty$. On the other hand, $A^k$ is stochastic for any $k\in\N$, in particular all entries of $A^k$ belong to $[0,1]$. It follows that the entries of $SA^kS^{-1}$ are also bounded as $k\to\infty$. Since $J^k=SA^kS^{-1}$ we get a contradiction.
% End of remark
We also require the following lemma. % about the limit set of powers of a stochastic matrix.

\begin{lem}\label{limitsCompute}
For any stochastic matrix $A$, the sequence $\{A^k\}_{k=1}^\infty$ has a finite number of limits. Namely, there exist a \emph{computable} constant $d$ such that, for each $r=0,\ldots,d-1$, the subsequence $\{A^{dm+r}\}_{m=1}^\infty$ converges to a limit, and this limit can be computed in polynomial time given $d$ and $r$.
\end{lem}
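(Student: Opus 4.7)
The plan is to reduce to the Jordan normal form $A = S^{-1} J S$ and analyse the dominant and subdominant Jordan blocks separately. By Lemma~\ref{rootsLem}, every dominant eigenvalue $\lambda \in \hat\sigma(A)$ of the stochastic matrix $A$ is a root of unity of order at most $n$, and the Jordan blocks associated with $\lambda$ are all $1\times 1$. Define
\[
d \;=\; \mathrm{lcm}\bigl\{\,\mathrm{ord}(\lambda)\ :\ \lambda\in\hat\sigma(A)\,\bigr\},
\]
which divides $\mathrm{lcm}(1,2,\ldots,n)$ and is therefore effectively bounded and computable from $A$. By construction $\lambda^{d}=1$ for every $\lambda\in\hat\sigma(A)$, so $\lambda^{dm+r}=\lambda^{r}$ for all $m\geq 0$ and $r\in\{0,\ldots,d-1\}$.

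Next I would handle the subdominant spectrum. For any Jordan block $J_{\ell}(\mu)$ with $|\mu|<1$, formula \eqref{jordanform} shows that each entry of $J_{\ell}(\mu)^{k}$ has the form $\binom{k}{s}\mu^{k-s}$ for some fixed $s<\ell$. Since $|\mu|^{k}$ decays exponentially in $k$ while $\binom{k}{s}$ grows only polynomially, every such entry tends to $0$ as $k\to\infty$. Combining the two cases yields, for every $r\in\{0,\ldots,d-1\}$,
\[
\lim_{m\to\infty} A^{dm+r} \;=\; S^{-1}\Bigl(\,\bigoplus_{\lambda\in\hat\sigma(A)}\lambda^{r}\ \oplus\ \mathbf{0}\,\Bigr)\, S,
\]
where $\mathbf{0}$ is the zero block on the subdominant part of $J$. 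This establishes the existence of at most $d$ limits indexed by the residues modulo $d$.

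For the computability statement, I would invoke the algebraic number machinery recalled in the preliminaries: the eigenvalues of $A$ (in standard representation), together with the Jordan form $J$ and the conjugator $S$, can be computed in time polynomial in $\|A\|$. Selecting $\hat\sigma(A)$ reduces to testing $|\lambda|=1$, and the order of a dominant $\lambda$ is the smallest $k\leq n$ with $\lambda^{k}=1$; both tests are polynomial-time. Given $d$ and $r$, the matrix $\bigoplus_{\lambda}\lambda^{r}$ is obtained by $O(\log r)$ algebraic-number multiplications via repeated squaring, after which one conjugation by $S$ yields the limit in polynomial time in $\|A\|+\log d+\log r$.

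The conceptual content is clean: modulo $d$, the dominant blocks are periodic while the subdominant blocks vanish. The main technical obstacle I anticipate is the bit-complexity bookkeeping required to ensure that every intermediate quantity (eigenvalues, generalised eigenvectors, inverse of $S$, powers $\lambda^{r}$) stays polynomially bounded and that equality tests on algebraic numbers are carried out correctly; these, however, follow from standard results on computation with algebraic numbers cited in the preliminaries.
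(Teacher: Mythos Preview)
Your proposal is correct and follows essentially the same route as the paper: pass to the Jordan normal form, use Lemma~\ref{rootsLem} to identify the dominant eigenvalues as roots of unity with $1\times 1$ Jordan blocks, define $d$ as the lcm of their orders, and observe that along each residue class modulo $d$ the dominant part is constant while the subdominant blocks vanish, yielding $\lim_{m\to\infty} A^{dm+r}=S^{-1}J'S$ with $J'$ obtained by replacing dominant $\lambda$ with $\lambda^{r}$ and subdominant blocks with zero. The paper additionally remarks that $d$ may be exponential in the dimension of $A$, which is worth noting, but otherwise your argument and the paper's coincide.
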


\begin{proof}
Let $A$ be a stochastic matrix. As shown in \cite{Cai94}, we can compute in polynomial time the Jordan normal form of $A$ and a transformation matrix $S$ such that $A=S^{-1}JS$. Note that $A$ may have complex eigenvalues, so all computations are done using standard representations of algebraic numbers.

By Lemma \ref{rootsLem}, all dominant eigenvalues of $A$ are roots of unity of orders no more than $n$, and their Jordan blocks have size $1\times 1$. If $\lambda$ is a root of unity of order $p$, then $\{\lambda^k\}_{k=1}^\infty$ is a periodic sequence with period $p$. On the other hand, if $J_\ell(\lambda)$ is a Jordan block corresponding to an eigenvalue $\lambda$ such that $|\lambda|<1$, then $\lim_{k\to\infty} J_\ell(\lambda)^k$  is equal to the zero matrix.

Let $d$ be the least common multiple of the orders of the roots of unity among the eigenvalues of $A$. Now if $\lambda$ is a dominant eigenvalue of $A$, then the values of $\lambda^{dm+r} = \lambda^r$ do not depend on $m$, where $r=0,\ldots,d-1$. Hence $J^{dm+r}$ converges to a limit when $m\to\infty$. This limit is equal to a matrix $J'$ obtained from $J$ by replacing all dominant $\lambda$ with $\lambda^r$ and all Jordan blocks corresponding to subdominant eigenvalues with zero matrices. So, $\lim_{m\to\infty} A^{dm+r} = S^{-1}J'S$.

This shows that $\{A^k\}_{k=1}^\infty$ has at most $d$ limits. Finally, we note that $d$ may be exponential in the dimension of $A$. However, if $\{A^k\}_{k=1}^\infty$ has a single limit, then this limit can be computed in polynomial time.
\end{proof}

For a stochastic matrix $A$, we will use notation $A^\omega$ to denote the set of all limits of the sequence $\{A^k\}_{k=1}^\infty$. If $\{A^k\}_{k=1}^\infty$ has a single limit $A'$, then we will identify the set $A^\omega =\{A'\}$ with the matrix $A'$ and write $A^\omega=A'$.

\section{Decidability of Cutpoint Isolation}\label{algSec}

In this section we will give a proof of Theorem \ref{mainthm} which is our main result. The crucial ingredient of our proof is the following technical proposition which will be proven in Section~\ref{limitLemSec}.

\begin{prop}\label{limitlem}
Let $J = \{1, 2, \ldots, \ell\}$ be indices, $\lambda\in [0,1]$ a cutpoint, and let $J_F \subseteq J$ be such that $k_t$ is a free variable, for $t \in J_F$, and $k_t$ is assigned a fixed finite value, for $t \in J\setminus J_F$.
Then
\begin{itemize}
	\item either $\lambda \in \bra{u} A_1^{k_1} A_2^{k_2} \cdots A_\ell^{k_\ell}\ket{v}$, where $k_t=\omega$ for all $t\in J_F$,
	\item or else there exists a constant $C>0$ such that $\lambda\in \bra{u} A_1^{k_1} A_2^{k_2} \cdots A_\ell^{k_\ell}\ket{v}$ implies $k_t < C$ for at least one $t \in J_F$.
\end{itemize}
Moreover, we can decide whether the first case holds and compute the constant $C$ in the second case.
\end{prop}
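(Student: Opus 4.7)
The plan is to carry out a spectral/Jordan decomposition of each $A_t$, separating its powers into a part that persists as $k_t\to\infty$ and a vanishing remainder. For each $t\in J_F$, write $A_t^{k_t} = D_t(k_t) + R_t(k_t)$, where
\begin{equation*}
D_t(k_t) \;=\; \sum_{j\,:\,|\lambda_{t,j}|=1} \lambda_{t,j}^{k_t}\ket{v_{t,j}}\bra{u_{t,j}}
\end{equation*}
is the dominant part and $R_t(k_t)$ is the subdominant remainder built from Jordan blocks of eigenvalues with $|\lambda_{t,j}|<1$. Lemma~\ref{rootsLem} plays a crucial role in two ways: it guarantees that the dominant Jordan blocks are $1\times 1$, so $D_t(k_t)$ is genuinely a sum of rank-one spectral projectors rather than a polynomially growing expression; and it ensures each dominant $\lambda_{t,j}$ is a root of unity of order at most $n$, so $D_t(k_t)$ is periodic in $k_t$ with a computable period $d_t$ and takes only finitely many algebraic values. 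For $t\notin J_F$, the matrix $A_t^{k_t}$ is a fixed matrix.

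Distributing the product across these sums yields $\bra{u}A_1^{k_1}\cdots A_\ell^{k_\ell}\ket{v} = M(\vec{k}) + E(\vec{k})$, where $M(\vec{k})$ is the single term in which every free factor contributes its dominant part $D_t(k_t)$, and $E(\vec{k})$ collects the remaining $2^{|J_F|}-1$ terms, each of which contains at least one factor $R_t(k_t)$ for some $t\in J_F$. The values of $M$ form a finite algebraic set indexed by the residues $k_t\bmod d_t$ for $t\in J_F$, and this set is precisely $\bra{u}A_1^{k_1}\cdots A_\ell^{k_\ell}\ket{v}$ with $k_t=\omega$ for all $t\in J_F$: along any arithmetic progression $k_t=d_t m+r_t$, $R_t(k_t)\to 0$ and $D_t(k_t)=D_t(r_t)$, matching the description of $A_t^\omega$ in Lemma~\ref{limitsCompute}.

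The dichotomy is now immediate. Either $\lambda$ coincides with one of the finitely many values of $M$, which is the first alternative of the Proposition and is decidable by enumerating residues via Lemma~\ref{limitsCompute} and comparing algebraic numbers; or else a strictly positive separation $\delta = \min_{\vec{k}}|\lambda - M(\vec{k})|>0$ can be effectively computed from the algebraic representations, and any solution of $\bra{u}A_1^{k_1}\cdots A_\ell^{k_\ell}\ket{v}=\lambda$ must then satisfy $|E(\vec{k})|\geq \delta$. Using formula~\eqref{jordanform}, each entry of $R_t(k_t)$ is bounded by $C_t(1+k_t)^{n-1}\rho_t^{k_t}$, where $\rho_t<1$ is the subdominant spectral radius of $A_t$, while $D_t(k_t)$ and the fixed factors are uniformly bounded. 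Consequently $|E(\vec{k})|\leq \Phi(\min_{t\in J_F}k_t)$ for a computable function $\Phi$ with $\Phi(x)\to 0$, and solving $\Phi(C)<\delta$ produces the desired $C$, giving a contradiction unless at least one free $k_t$ is below $C$.

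The main obstacle I foresee is converting the above into fully effective estimates. Producing an explicit $C$ requires (i) effective upper bounds on the entries of the transformation matrices in the Jordan normal form of each $A_t$, available via the polynomial-time algorithm of Cai cited in Lemma~\ref{limitsCompute}; (ii) an explicit strict upper bound on each subdominant spectral radius $\rho_t<1$, obtainable from root-separation bounds applied to the characteristic polynomial; and (iii) an explicit lower bound on the separation $\delta$ between $\lambda$ and the algebraic values of $M$, available from standard algebraic-number bounds. Once these three ingredients are in hand, $\Phi$ becomes an explicit polynomial-times-exponential function whose inversion yields $C$ effectively.
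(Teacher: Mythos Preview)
Your proposal is correct and follows essentially the same route as the paper. The only cosmetic difference is that you split each $A_t^{k_t}$ into a dominant part $D_t(k_t)$ plus a subdominant remainder $R_t(k_t)$ at the matrix level and then expand the product, whereas the paper writes out the full Jordan expansion of $\bra{u}A_1^{k_1}\cdots A_\ell^{k_\ell}\ket{v}$ term by term and then partitions the resulting scalar sum into $S_0$ (all free indices hit dominant eigenvalues) and $S_1$ (some free index hits a subdominant eigenvalue); these are the same decomposition, and the ensuing bound on the vanishing part and the computation of $C$ proceed identically.
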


Below we give a high-level description of the main algorithm (Algorithm~\ref{mainalg}), which gives a formal proof of Theorem~\ref{mainthm}, and explain how Proposition~\ref{limitlem} is used there.

Let $\mathcal{L} \subseteq a_1^*a_2^* \cdots a_\ell^*$ be a given letter-bounded CFL. We start by considering all indices $J=\{1,\ldots,\ell\}$ as \emph{free} (i.e. their value is not fixed and they will later be given a fixed value from $\N \cup \{\omega\}$) and iteratively fix them until no free indices remain. We first use Parikh and Ginsburg's results (Proposition~\ref{parikhProp}) to compute the Parikh image $p(\mathcal{L})$. Then we nondeterministically choose a linear subset $Q$ and use it to determine the indices which can be taken to arbitrary high values while staying within $Q$. These indices will correspond to the ``free variables'' in the algorithm.

Let $J_F$ be a set of such indices (which will be called $R$ in Algorithm~\ref{mainalg}). We then set $k_t$ for $t \in J \setminus J_F$ to appropriate finite values, while $k_t$ with $t \in J_F$ remain free variables. We wish to determine if there is a choice of $k_t \in \N \cup \{\omega\}$ for $t \in J_F$ such that
\[
\lambda \in \bra{u} A_1^{k_1} A_2^{k_2} \cdots A_\ell^{k_\ell}\ket{v},
\]
that is, whether $\lambda$ can be reached by setting each free variable $k_t$ either to some finite value or else to $\omega$, an ``infinite'' power.  Proposition~\ref{limitlem} then tells us that either all free variables should be set at $\omega$ in order to reach $\lambda$ (and this is decidable), or else there exists a computable constant $C$ such that \emph{if} we can reach $\lambda$ by some choice of these free variables, then some $k_t < C$ for an index $t \in J_F$.

In the first case, we set all free variable to $\omega$. In the second case, we nondeterministically choose some free variable, fix its value in the range $[0, C)$ and then update our linear set $Q$ to satisfy a new constraint. The procedure repeats iteratively until all free variables have been assigned a fixed value. The algorithm then verifies if this choice of variables gives a solution.

\hrulefill

\begin{algorithm}[H]\label{mainalg}
\caption{Nondeterministic algorithm deciding whether a given cutpoint is isolated.} \label{thealg}

\begin{algorithmic}
    \STATE \textbf{Stage one} (Nondeterministic iterative fixing of free variables):
    \bindent
	\STATE{Let $T = J = \{1, 2, \ldots, \ell\}$.}
	\STATE{Compute the Parikh image $p(\mathcal{L})$ and nondeterministically choose one of its finitely}
	\STATE{many linear subsets $Q = \{\,q_0 + t_1q_1 + \cdots +t_r q_r\ :\ t_1,\ldots,t_r\in\N\,\} \subseteq p(\mathcal{L})$.\footnote{Here we use the fact that if $\lambda$ can be approached arbitrarily closely within  $p(\mathcal{L})$, then $\lambda$ can be approached by staying within one of the finitely many linear subsets of $p(\mathcal{L})$.}}
	\WHILE {$T \neq \emptyset$}
		\STATE{Let $R$ be the set of indices $j\in T$ such that at least one $q_i$ with $i\geq 1$ has a nonzero $j$th coordinate.\footnote{Thus $R$ is the subset of indices from $T$ that can be taken to arbitrarily large powers simultaneously.}}
		\STATE{For each $j\in T\setminus R$, the $j$th coordinate of all vectors from $Q$ is equal to the $j$th coordinate of $q_0$.\footnote{This is because for $j\in T\setminus R$ all $q_i$ with $i\geq 1$ have zero $j$th coordinate.} So, we set $k_j$ for $j\in T\setminus R$ to be the $j$th coordinate of $q_0$.}
		%\bindent
			\STATE{Then, for $j\in R$, compute the limits $A_j^\omega$ with indices respecting set $Q$\\ (see Remark~\ref{rem:1} below for details).}
			\STATE{Check whether $\lambda \in \bra{u} A_1^{k_1} A_2^{k_2} \cdots A_\ell^{k_\ell}\ket{v}$, where $k_j=\omega$ for all $j\in R$.}
			\STATE{If yes, return True and stop.}
			\STATE{Otherwise, assuming all indices $k_j$ for $j \in J\setminus R$ are fixed and $R$ is the set of free variables, use Proposition~\ref{limitlem} to compute the constant $C > 0$ such that $\lambda\in \bra{u} A_1^{k_1} A_2^{k_2} \cdots A_\ell^{k_\ell}\ket{v}$ implies $k_j < C$ for at least one $j \in R$.}
			\STATE{Then nondeterministically choose $j \in R$, fix $k_j \in [0, C)$ and set $T \leftarrow R \setminus \{j\}$.}
			\STATE{Next, for the chosen index $j$, find those indices $i$ in $\{1,\ldots,r\}$ for which $q_i$ has a nonzero $j$th coordinate. Without loss of generality, suppose $\{1, \ldots, s\}$ are these indices.}
			\STATE{Fixing $k_j\in [0, C)$, restricts the parameters $t_1, \ldots, t_s$ in $Q$ to a finite set of possible values since the vector $q_0 + t_1q_1 + \cdots +t_s q_s$ must have $k_j$ in its $j$th coordinate.}
			\STATE{Nondeterministically choose one of these values for $t_1,\ldots,t_s$ or return False and stop, if such a choice is impossible.}
			\STATE{Let $Q \leftarrow \{\,(q_0 + t_1q_1 + \cdots +t_s q_s) + t_{s+1} q_{s+1} + \cdots +t_r q_r\ :\ t_{s+1},\ldots,t_r\in\N\,\}$.\footnote{Thus $(q_0 + t_1q_1 + \cdots +t_s q_s)$ becomes the new value of $q_0$ in $Q$.}}
		%\eindent
	\ENDWHILE
    \eindent
\STATE \textbf{Stage two} (Verifying the computation):
    \bindent
    \STATE{At this stage we have fixed all variables $k_1, k_2, \ldots,k_\ell$ to some finite values.}
	%\STATE{For each $j \in J$ such that $k_j = \omega$, compute $A_j^{\omega}$ using Lemma~\ref{limitsCompute}.}
	\STATE{Compute $\bra{u} A_1^{k_1} A_2^{k_2} \cdots A_{\ell}^{k_\ell} \ket{v}$ for the obtained values of $k_1,\ldots,k_\ell\in \N$.} 
	\STATE{Return True if $\lambda = \bra{u} A_1^{k_1} A_2^{k_2} \cdots A_\ell^{k_\ell}\ket{v}$ or False, otherwise.}
    \eindent
\STATE \textbf{End.}
\vspace{0.2cm}
\end{algorithmic}
\end{algorithm}

\hrulefill

\begin{remark}\label{rem:1}
	To compute the limits $A_j^\omega$ with indices respecting set $Q$, note that the projection of $Q$ on the $j$th coordinate is equal to $\{\,q_{0,j} + t_1q_{1,j} + \cdots +t_r q_{r,j}\ :\ t_1,\ldots,t_r\in\N\,\}$, where $q_{i,j}$ is the $j$th coordinate of $q_i$. The set $\langle q_{1,j},\ldots,q_{r,j}\rangle = \{t_1q_{1,j} + \cdots +t_r q_{r,j}\ :\ t_1,\ldots,t_r\in\N\,\}$ is a finitely generated subsemigroups of $(\N,+)$. Let $d=\gcd( q_{1,j},\ldots,q_{r,j})$, then there is a number $s>0$ such that for any $t\geq s$, we have $t\in \langle q_{1,j},\ldots,q_{r,j}\rangle$ if and only if $d$ divides $t$. This is a well-known property of the subsemigroups of $(\N,+)$ \cite{RS09}.
	Thus the limits $A_j^\omega$ with indices respecting $Q$ are equal to $A_j^\omega = A_j^{q_{0,j}}(A_j^d)^\omega$, where the limits $(A_j^d)^\omega$ are computed using Lemma \ref{limitsCompute}.
\end{remark}

It remains to prove that we can compute a separation bound $\epsilon > 0$ between $\lambda$ and the closest acceptance probability of $\pfa$ for any input word $w \in \mathcal{L}$. Algorithm~\ref{mainalg} has two stopping conditions, either by returning True in \textbf{Stage one} (which we discount since it implies the cutpoint is not isolated), or else after \textbf{Stage two}.

Algorithm~\ref{mainalg} has two sources of nondeterminism in \textbf{Stage~one}: in the choice of linear subset $Q$ and then during the while loop in the choice of $j \in R$ and $k_j \in [0, C)$. We will evaluate every choice of linear subset $Q$ and every choice of $j$ and $k_j$ to cover all possible cases, updating a global variable $\epsilon$ at the end of every nondeterministic branch. Initially, we set $\epsilon \leftarrow \infty$, and let $\epsilon_1$, $\epsilon_2$ be additional global variables that are set $\epsilon_1 \leftarrow \epsilon_2 \leftarrow \infty$ at the beginning of every nondeterministic branch.

During the execution of \textbf{Stage one} we use Proposition~\ref{limitlem} to compute $C$ such that if all free variables are above $C$ then we are at least some $\epsilon' > 0$ away from $\lambda$. Note that $\epsilon'$ is less than half of the distance between $\lambda$ and some limit values. For each iteration of the while loop, we set $\epsilon_1 \leftarrow \min\{\epsilon_1, \epsilon'\}$ to keep track of the minimal value. During \textbf{Stage two}, all variables have a fixed finite value, and we set $\epsilon_2 \leftarrow |\bra{u} A_1^{k_1} A_2^{k_2} \cdots A_\ell^{k_\ell}\ket{v} - \lambda|$ which is greater than zero assuming $\lambda$ is isolated. Finally, we set $\epsilon \leftarrow \min\{\epsilon, \epsilon_1, \epsilon_2\} > 0$.

After inspecting all possible nondeterministic runs of the algorithm, the obtained value of $\epsilon$ gives us the separation bound. Indeed, during the execution of the above procedure, $\epsilon$ is updated to the minimum of $\epsilon_1$ and $\epsilon_2$, where $\epsilon_1$ is less than half of the distance between $\lambda$ and some limit values and $\epsilon_2$ keeps track of the distance between $\lambda$ and the values $\bra{u} A_1^{k_1} A_2^{k_2} \cdots A_\ell^{k_\ell}\ket{v}$ when each index $k_j$ is less than the corresponding constant $C$.

\section{Proof of Proposition~\ref{limitlem}}\label{limitLemSec}

We begin with a proof sketch. Since each $A_i$ is stochastic, $\hat{\sigma}(A_i)$ contains at least one eigenvalue $1$ and all other eigenvalues in $\hat{\sigma}(A_i)$ are roots of unity by Lemma~\ref{rootsLem}. All eigenvalues in $\sigma(A_i) \setminus \hat{\sigma}(A_i)$ have absolute value strictly smaller than $1$. Our approach is to rewrite the expression
\begin{equation}\label{basicForm}
	\bra{u} A_1^{k_1} A_2^{k_2} \cdots A_\ell^{k_\ell}\ket{v}
\end{equation}
into the sum of two terms (which will be denoted by $S_0$ and $S_1$) such that $S_0$ determines the limit behaviour as all free variables tend towards infinity, since they control only dominant eigenvalues, while $S_1$ is vanishing, since at least one free variable controls a subdominant eigenvalue. We can then reason that if all free variables simultaneously become larger, then Eqn~(\ref{basicForm}) tends towards a set of computable limits with some vanishing terms. Therefore we can determine either that we can reach $\lambda$ when all free variables are $\omega$, or else we can prove that Eqn~(\ref{basicForm}) is within any $\epsilon>0$ of a limit value once all free variables are sufficiently large, which proves the proposition (by setting $\epsilon$ as less than the smallest difference from a limit value and $\lambda$). We now proceed with the formal details. 

First, we consider the simpler case when all matrices are diagonalizable and then show how to extend this argument to the general case.

\noindent{\bf Diagonalizable matrices.} Let us first assume that all matrices are diagonalizable. By the spectral decomposition theorem (see Eqn~(\ref{spectralform})), we may write a matrix $A_i^{k_i}$ as:
\begin{eqnarray}
A_i^{k_i} & = & \sum_{j=1}^n\lambda_{i,j}^{k_i} \ket{v_{i,j}}\bra{u_{i,j}},
\end{eqnarray}
where $\{\lambda_{i,1},\ldots,\lambda_{i,n}\}$ are the eigenvalues of $A_i$ repeated according to their multiplicities, and the vectors $\ket{v_{i,j}}$ and $\bra{u_{i,j}}$, for $1 \leq j \leq n$, are related to the eigenvectors of $A_i$. Now, we can write:
\begin{eqnarray*}
& & \bra{u} A_1^{k_1} A_2^{k_2} \cdots A_\ell^{k_\ell} \ket{v}
  =  \bra{u} \left( \prod_{i = 1}^{\ell} \left( \sum_{j=1}^n\lambda_{i,j}^{k_i} \ket{v_{i,j}}\bra{u_{i,j}}\right) \right) \ket{v} \\
 & = & \sum_{j_1, \ldots, j_\ell \in [1,n]} \lambda_{1,j_1}^{k_1} \lambda_{2,j_2}^{k_2} \cdots \lambda_{\ell,j_\ell}^{k_\ell} \braket{u|v_{1,j_1}}\braket{u_{1,j_1}|v_{2,j_2}} \bra{u_{2,j_2}} \cdots \ket{v_{\ell,j_\ell}} \braket{u_{\ell,j_\ell}|v}
\end{eqnarray*}
Let us thus define $\Theta_{j_1, \ldots, j_\ell} = \braket{u|v_{1,j_1}}\braket{u_{1,j_1}|v_{2,j_2}} \bra{u_{2,j_2}} \cdots \ket{v_{\ell,j_\ell}} \braket{u_{\ell,j_\ell}|v}$. The above sum can be split in two: the first summand containing terms where only dominant eigenvalues are to the power of free variables, and the second containing terms with at least one subdominant eigenvalue to the power of a free variable (these two terms are labelled $S_0$ and $S_1$ below). This is a useful decomposition since any term which contains a subdominant eigenvalue taken to the power of a free variable will tend towards zero as the values of all free variables (simultaneously) increase. Thus we can write $\bra{u} A_1^{k_1} A_2^{k_2} \cdots A_\ell^{k_\ell} \ket{v}=S_0+S_1$, where
\begin{align*}
S_0 &= \sum_{\substack{j_1, \ldots, j_\ell \in [1, n] \\ \forall t \in J_F\ :\ |\lambda_{t, j_t}| = 1}} \lambda_{1,j_1}^{k_1} \lambda_{2,j_2}^{k_2} \cdots \lambda_{\ell,j_\ell}^{k_\ell} \Theta_{j_1, \ldots, j_\ell},\\
S_1 &= \sum_{\substack{j_1, \ldots, j_\ell \in [1, n] \\ \exists  t \in J_F\ :\ |\lambda_{t, j_t}| < 1}} \lambda_{1,j_1}^{k_1} \lambda_{2,j_2}^{k_2} \cdots \lambda_{\ell,j_\ell}^{k_\ell} \Theta_{j_1, \ldots, j_\ell}.
\end{align*}
By Lemma~\ref{rootsLem} the dominant eigenvalues are roots of unity, and so $S_0$ assumes only finitely many different values as $k_t$ with $t\in J_F$ vary, while $k_t$ with $t\in J\setminus J_F$ are fixed.

Suppose $S_1$ in not an empty sum since otherwise $S_1=0$. Then there exists $t \in J_F$ and $j_t\in [1,n]$ such that $|\lambda_{t, j_t}| < 1$. Let $\rho$ be the maximum among such values, that is,
\[
\rho = \max \{\,|\lambda_{t, j}|\ :\ t \in J_F,\ j\in [1,n] \text{ and } |\lambda_{t, j}| < 1\,\}.
\]
Suppose $k_t\geq C$ for $t\in J_F$, where $C$ is some constant to be chosen later. Then $S_1$ can be estimated as follows: since for every choice of $j_1, \ldots, j_\ell$ in the summation $S_1$ there is $t\in J_F$ with $|\lambda_{t, j_t}| \leq \rho < 1$ and $|\lambda_{i, j}| \leq 1$ for all other $\lambda_{i, j}$, we have
\[
|S_1|\leq C_1\rho^C,\quad \text{where}\quad C_1= \sum_{\substack{j_1, \ldots, j_\ell \in [1, n] \\ \exists  t \in J_F\ :\ |\lambda_{t, j_t}| < 1}} |\Theta_{j_1, \ldots, j_\ell}|.
\]
Notice that for any rational $\epsilon > 0$, we can compute $C \in \N$ such that $|S_1|\leq C_1\rho^{C} < \epsilon$.
Now, $S_0$ gives a finite number of limit values for $\bra{u} A_1^{k_1} A_2^{k_2} \cdots A_\ell^{k_\ell} \ket{v}$. If $\lambda$ is not equal to any of them, then choose $\epsilon>0$ to be less than half the minimal distance between $\lambda$ and those limit values. Using this $\epsilon$, we compute $C$ as above. By definition of $C$, if all $k_t\geq C$ for $t\in J_F$, then the distance between $\bra{u} A_1^{k_1} A_2^{k_2} \cdots A_\ell^{k_\ell} \ket{v}$ and one of the limit values of $S_0$ is less than~$\epsilon$. Thus $\bra{u} A_1^{k_1} A_2^{k_2} \cdots A_\ell^{k_\ell} \ket{v}$ cannot be equal to $\lambda$ when all $k_t\geq C$ for $t\in J_F$. Hence if $\lambda=\bra{u} A_1^{k_1} A_2^{k_2} \cdots A_\ell^{k_\ell} \ket{v}$, then there is $t\in J_F$ such that $k_t<C$.

\noindent{\bf The general case.} We now show how to extend the proof to the case when some matrices are non-diagonalizable.

Let $a_{i,j} = \sum_{s = 1}^{j-1}\ell_{i,s}$ be the sum of the sizes of the first $j-1$ Jordan blocks of matrix $A_i$, so that $a_{i,1} = 0, a_{i,2} = \ell_{i,1}, a_{i,3} = \ell_{i,1} + \ell_{i,2}$ etc. Then we see that by using Eqn~(\ref{jordanform}), $A_i^{k_i} = S_i^{-1} (J_{\ell_{i,1}}(\lambda_{i, 1})^{k_i} \oplus \cdots \oplus J_{\ell_{i,n_i}}(\lambda_{i, n_i})^{k_i}) S_i$ has the form
\begin{eqnarray*}
&  & S_i^{-1} \left( \sum_{1 \leq j \leq n_i} \sum_{0 \leq m \leq \ell_{i,j}-1} \lambda_{i,j}^{k_i-m} \binom{k_i}{m}\left( \sum_{1 \leq p \leq \ell_{i,j}-m} \ket{e_{a_{i,j}+p}}\bra{e_{a_{i,j} + m + p}}\right) \right) S_i \\
& = & \sum_{1 \leq j \leq n_i} \sum_{0 \leq m \leq \ell_{i,j}-1} \lambda_{i,j}^{k_i-m} \binom{k_i}{m}\left( \sum_{1 \leq p \leq \ell_{i,j}-m} S_i^{-1}\ket{e_{a_{i,j}+p}}\bra{e_{a_{i,j} + m + p}}S_i\right) \\
& = & \sum_{1 \leq j \leq n_i} \sum_{0 \leq m \leq \ell_{i,j}-1} \lambda_{i,j}^{k_i-m} \binom{k_i}{m}\left( \sum_{1 \leq p \leq \ell_{i,j}-m} \ket{v_{i,a_{i,j}+p}} \bra{u_{i,a_{i,j} + m + p}}\right),
\end{eqnarray*}
where $S_i = \sum_{q = 1}^{n} \ket{e_q}\bra{u_{i,q}}$ and $S_i^{-1} = \sum_{q = 1}^{n} \ket{v_{i,q}}\bra{e_q}$, with $e_q$ the $q^{\textnormal{th}}$ basis vector. Here we used the property that $\braket{e_i | e_j} = 0$ for any $i \neq j$. We may now compute that:
\begin{eqnarray*}
& & \bra{u} A_1^{k_1} A_2^{k_2} \cdots A_\ell^{k_\ell} \ket{v} \\
& = & \bra{u} \prod_{i = 1}^{\ell} \left(\sum_{1 \leq j \leq n_i} \sum_{0 \leq m \leq \ell_{i,j}-1} \lambda_{i,j}^{k_i-m} \binom{k_i}{m}\left( \sum_{1 \leq p \leq \ell_{i,j}-m} \ket{v_{i,a_{i,j}+p}} \bra{u_{i,a_{i,j} + m + p}}\right)  \right) \ket{v} \\
& = & \bra{u} \prod_{i = 1}^{\ell} \left(\sum_{1 \leq j \leq n_i} \sum_{0 \leq m \leq \ell_{i,j}-1} \lambda_{i,j}^{k_i-m} \binom{k_i}{m} \Psi_{i,j,m} \right) \ket{v}\\
& = & \sum_{\substack{j_1, \ldots, j_\ell\ |\  j_q \in [1, n_q]\\ m_1, \ldots, m_\ell\ |\  m_q \in [0, \ell_{q, j_q}-1]}} \left(\prod_{1 \leq t \leq \ell} \lambda_{t,j_t}^{k_t-m_t} \binom{k_t}{m_t} \right) \bra{u}\Psi_{j_1, \ldots, j_\ell}^{m_1, \ldots, m_\ell}\ket{v},
\end{eqnarray*}
where $\Psi_{i,j,m} = \sum\limits_{1 \leq p \leq \ell_{i,j}-m} \ket{v_{i,a_{i,j}+p}} \bra{u_{i,a_{i,j} + m + p}}$ and
\[
\Psi_{j_1, \ldots, j_\ell}^{m_1, \ldots, m_\ell} = \Psi_{1,j_1,m_1}\Psi_{2,j_2,m_2}\cdots \Psi_{\ell,j_\ell,m_\ell}.
\]
We may split the above summation, as before, into two parts corresponding to products containing only dominant eigenvalues to powers of free variables and those containing at least one subdominant eigenvalue to the power of a free variables. 
By Lemma~\ref{rootsLem}, Jordan blocks corresponding to dominant eigenvalues have size $1\times 1$, that is, if $|\lambda_{t,j_t}| = 1$ for $t\in J_F$, then $\ell_{t,j_t}=1$ and hence $m_t=0$. So, we can write
\[
\bra{u} A_1^{k_1} A_2^{k_2} \cdots A_\ell^{k_\ell} \ket{v} = S_0 + S_1,
\]
where
\begin{align}\label{defectiveFinalForm}
&S_0 =  \sum_{\substack{j_1, \ldots, j_\ell\ |\  j_q \in [1, n_q]\\ m_1, \ldots, m_\ell\ |\  m_q \in [0, \ell_{q, j_q}-1] \\ \forall t \in J_F\ :\ |\lambda_{t, j_t}| = 1}} \Bigg(\prod_{t\in J_F} \lambda_{t,j_t}^{k_t}\Bigg) \Theta_{j_1, \ldots, j_\ell}^{m_1, \ldots, m_\ell},\nonumber \\
&S_1 = \sum_{\substack{j_1, \ldots, j_\ell\ |\  j_q \in [1, n_q]\\ m_1, \ldots, m_\ell\ |\  m_q \in [0, \ell_{q, j_q}-1] \\ \exists t \in J_F\ :\ |\lambda_{t, j_t}| < 1}} \Bigg(\prod_{t\in J_F} \lambda_{t,j_t}^{k_t-m_t} \binom{k_t}{m_t} \Bigg) \Theta_{j_1, \ldots, j_\ell}^{m_1, \ldots, m_\ell}\qquad \text{and}\\
&\Theta_{j_1, \ldots, j_\ell}^{m_1, \ldots, m_\ell} = \Bigg(\prod_{t\in J\setminus J_F} \lambda_{t,j_t}^{k_t-m_t} \binom{k_t}{m_t} \Bigg) \bra{u}\Psi_{j_1, \ldots, j_\ell}^{m_1, \ldots, m_\ell}\ket{v}.\nonumber
\end{align}
Note that $k_t$'s in the formula for $\Theta_{j_1, \ldots, j_\ell}^{m_1, \ldots, m_\ell}$ are fixed since $t\notin J_F$ and
so $A_t$ is not a free matrix. In other words, $\Theta_{j_1, \ldots, j_\ell}^{m_1, \ldots, m_\ell}$ does not depend on free variables $k_i$ for $t\in J_F$. This also implies that $S_0$ assumes only finitely many different values as $k_t$ with $t\in J_F$ vary since by Lemma~\ref{rootsLem} the dominant eigenvalues are roots of unity.

Again using the fact that Jordan blocks corresponding to the dominant eigenvalues have size $1$$\times 1$,
we can rewrite the product inside the formula for $S_1$ from Eqn~(\ref{defectiveFinalForm}) as follows
\[
\prod_{t\in J_F} \lambda_{t,j_t}^{k_t-m_t} \binom{k_t}{m_t} =
\prod_{\substack{t\in J_F\\ |\lambda_{t,j_t}|=1}} \lambda_{t,j_t}^{k_t}\ \cdot
\prod_{\substack{t\in J_F\\ |\lambda_{t,j_t}|<1}} \lambda_{t,j_t}^{k_t-m_t} \binom{k_t}{m_t}.
\]

Suppose $S_1$ in not an empty sum since otherwise $S_1=0$. Then there exists $t \in J_F$ and $j_t\in [1,n_t]$ such that $|\lambda_{t, j_t}| < 1$. Let $\rho$ be the maximum among such values, that is,
\[
\rho = \max \{\,|\lambda_{t, j}|\ :\ t \in J_F,\ j\in [1,n_t] \text{ and } |\lambda_{t, j}| < 1\,\}.
\]
Notice that every summand in $S_1$ has at least one $|\lambda_{t,j_t}|\leq \rho<1$ with $t\in J_F$, and $|\lambda_{i,j}|\leq 1$ for all other $\lambda_{i,j}$. Also, $\binom{k_t}{m_t} \leq k_t^{m_t} \leq k_t^n$ since $m_t \leq n$. So every summand in $S_1$ can be estimated by the expression
\[
C_1\cdot\!\!\! \prod_{\substack{t\in J_F\\ |\lambda_{t,j_t}|<1}} \rho^{k_t} k_t^n, \qquad \text{where } C_1 \text{ is a computable constant}.
\]

We have $\rho^{k} k^n\to 0$ when $k\to \infty$, and for any rational $\delta>0$ we can compute $C$ such that $\rho^{k} k^n<\delta$ for $k\geq C$. If in addition we assume that $0<\delta<1$ and that $k_t\geq C$ for all $t\in J_F$, then $|S_1|\leq C_1 n^{2\ell}\delta$.

Now, $S_0$ gives a finite number of limit values for $\bra{u} A_1^{k_1} A_2^{k_2} \cdots A_\ell^{k_\ell} \ket{v}$. If $\lambda$ is not equal to any of them, then choose a rational $0<\delta<1$ such that $\epsilon = C_1 n^{2\ell}\delta$ is less than half the minimal distance between $\lambda$ and those limit values. Using this $\delta$, we compute $C$ as before. By definition of $C$, if all $k_t\geq C$ for $t\in J_F$, then the distance between $\bra{u} A_1^{k_1} A_2^{k_2} \cdots A_\ell^{k_\ell} \ket{v}$ and one of the limit values of $S_0$ is less than $\epsilon$. Thus $\bra{u} A_1^{k_1} A_2^{k_2} \cdots A_\ell^{k_\ell} \ket{v}$ cannot be equal to $\lambda$ when all $k_t\geq C$ for $t\in J_F$. Hence if $\lambda=\bra{u} A_1^{k_1} A_2^{k_2} \cdots A_\ell^{k_\ell} \ket{v}$, then there is $t\in J_F$ such that $k_t<C$.

\section{Other decidability results and lower bounds}\label{extensions}

For space reasons, we move the proof of Theorem~\ref{npthm} to the appendix. This theorem shows that determining if a cutpoint is isolated is at least NP-hard via an encoding of the subset sum problem (a well known NP-complete decision problem) into three state PFA.

In the remainder of this section we utilise Theorem~\ref{mainthm} to obtain some related decidability results. The first of these combines Theorem~\ref{mainthm} with a seminal result of Rabin and allows us to use our decidability result for cutpoint isolation to solve the emptiness problem for PFA on letter-bounded context-free languages when the cutpoint is isolated. We again highlight here that the emptiness problem is undecidable in general on letter-bounded languages, even when all matrices commute  and the PFA is polynomially ambiguous \cite{Bell19}.

\begin{cor}\label{newCor1}
The emptiness problem is decidable for probabilistic finite automata on letter-bounded context-free languages when the cutpoint is isolated.
\end{cor}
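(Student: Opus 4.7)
The plan is to combine the effective separation bound from Theorem~\ref{mainthm} with Rabin's classical theorem on isolated cutpoints. Recall that Rabin proved that if $\lambda$ is isolated for a PFA $\pfa$ with isolation radius $\delta>0$, then the cutpoint language $L_{\geq \lambda}(\pfa)$ is regular; moreover, inspecting his proof shows that given $\delta$ explicitly, one can effectively construct a deterministic finite automaton $\mathcal{A}_\lambda$ recognising $L_{\geq \lambda}(\pfa)$ (the construction quotients the set of reachable distribution vectors by a $\delta$-dependent equivalence, producing a finite state set of size bounded in terms of $\delta$ and the dimension of $\pfa$).

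First, I would feed the PFA $\pfa$, the cutpoint $\lambda$, and the letter-bounded context-free language $\mathcal{L}$ into Algorithm~\ref{thealg}. Because we are assuming that $\lambda$ is isolated, the algorithm returns True on every nondeterministic branch and, as explained at the end of Section~\ref{algSec}, produces a positive rational separation bound $\epsilon>0$ such that $|f_{\pfa}(w)-\lambda|\geq \epsilon$ for every $w\in\mathcal{L}$. In particular, $\lambda$ can be safely replaced by $\lambda-\epsilon/2$: the languages $L_{\geq \lambda}(\pfa)\cap\mathcal{L}$ and $L_{\geq \lambda-\epsilon/2}(\pfa)\cap\mathcal{L}$ coincide, and now $\lambda-\epsilon/2$ is isolated from every acceptance probability attained on $\mathcal{L}$ by at least $\epsilon/2$, making the isolation quantitative and effective.

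Next, I would feed this quantitative isolation into (the effective version of) Rabin's construction to produce a finite automaton $\mathcal{A}_\lambda$ with $L(\mathcal{A}_\lambda)=L_{\geq \lambda}(\pfa)$. Then $L_{\geq\lambda}(\pfa)\cap\mathcal{L}$ is the intersection of a regular language with a context-free language, which is context-free and whose grammar can be computed by a standard product construction. Emptiness of a context-free language is decidable, so we can decide whether there exists $w\in\mathcal{L}$ with $f_{\pfa}(w)\geq \lambda$, which is exactly the emptiness problem for $\pfa$ on $\mathcal{L}$ at cutpoint $\lambda$. The strict emptiness problem is handled identically, using $L_{>\lambda}(\pfa)$ in place of $L_{\geq \lambda}(\pfa)$ (Rabin's theorem applies equally).

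The only delicate step is the effectivity of Rabin's construction: although the statement of his theorem merely asserts regularity, one needs the stronger fact that a DFA can actually be built from $\pfa$ and $\delta$. This is the expected main obstacle to formalise cleanly, but it follows directly by carrying $\delta$ through Rabin's proof (the finite-index equivalence relation on reachable distributions is $\delta$-explicit), and this is precisely why computing the separation bound in Theorem~\ref{mainthm}, rather than merely deciding isolation, is essential.
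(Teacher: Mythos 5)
Your proposal follows essentially the same route as the paper's proof: compute the separation bound $\epsilon>0$ via Theorem~\ref{mainthm}, feed it into the effective (state-bounded) version of Rabin's theorem to construct a DFA for the cutpoint language, and then decide emptiness --- the paper merely leaves implicit the final step you spell out of intersecting with $\mathcal{L}$ and testing CFL emptiness. The only cosmetic difference is your shift of the cutpoint to $\lambda-\epsilon/2$, which is not needed since $\lambda$ itself is already isolated with radius $\epsilon$.
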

\begin{proof}
A seminal result of Rabin \cite{Ra63} showed that given a $n$-state PFA $\pfa$ acting on an alphabet $\Sigma$ and \emph{isolated} cutpoint $\lambda \in [0, 1]$ such that $\lambda$ is isolated by $\epsilon > 0$ (i.e. $|\pfa(w) - \lambda | > \epsilon$ for all $w \in \Sigma^*$), then there exists a DFA $\mathcal{D}$ such that $L_{< \lambda}(\pfa) = L(\mathcal{D})$, where $L(\mathcal{D})$ denotes the language accepted by the DFA $\mathcal{D}$. Moreover, Rabin showed that the number of states of $D$ is no more than $\left(1+\frac{|F|}{\epsilon}\right)^{n-1}$ where $F$ is the set of final states of $\pfa$.

We note that the proof of Theorem~\ref{mainthm} not only determines if a cutpoint is isolated but also determines an `isolation bound' $\epsilon > 0$ if it is isolated. In this case we can use Rabin's result to construct an equivalent DFA $\mathcal{D}_<$ recognising $L_{< \lambda}(\pfa)$. By inverting final and non final states of $\mathcal{D}_<$, we can construct $\mathcal{D}_{\geq}$ which recognises $L_{\geq \lambda}(\pfa)$. Finally we note that if $\lambda$ is isolated then $L_{< \lambda}(\pfa) = L_{\leq \lambda}(\pfa)$ and thus $\mathcal{D}_<$ and $\mathcal{D}_{\geq}$ recognise the same languages as $L_{\leq \lambda}(\pfa)$ and $ L_{> \lambda}(\pfa)$, respectively. Hence the emptiness problem is decidable.
\end{proof}

We now show that the value-$1$ problem for PFA on letter-bounded CFL inputs is decidable. This problem is undecidable for standard PFA but decidable for \#-cyclic automata \cite{GO10}.

\begin{cor}
The value-$1$ is decidable for probabilistic finite automata on letter-bounded context-free languages.
\end{cor}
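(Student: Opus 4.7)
The plan is to observe that the value-$1$ problem is essentially a notational specialisation of the cutpoint isolation problem when $\lambda=1$, and then invoke Theorem~\ref{mainthm} directly. First I would recall the formulations: value-$1$ asks whether $\sup_{w\in\mathcal{L}} f_{\pfa}(w) = 1$, i.e.\ whether for every $\epsilon>0$ there is some $w\in\mathcal{L}$ with $f_{\pfa}(w) > 1-\epsilon$; cutpoint isolation asks (negatively) whether for every $\epsilon>0$ there is some $w\in\mathcal{L}$ with $|f_{\pfa}(w) - \lambda|<\epsilon$.

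The key elementary observation is that any acceptance probability satisfies $f_{\pfa}(w)\in[0,1]$, so $|f_{\pfa}(w) - 1| = 1 - f_{\pfa}(w)$. Consequently the condition ``$f_{\pfa}(w) > 1-\epsilon$'' is literally the same as ``$|f_{\pfa}(w) - 1|<\epsilon$''. Thus the value-$1$ property holds precisely when the cutpoint $\lambda=1$ fails to be isolated.

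Given this, the algorithm is: on input $(\pfa,\mathcal{L})$, run the decision procedure of Theorem~\ref{mainthm} on the instance $(\pfa,\mathcal{L},\lambda=1)$; report that $\pfa$ has value $1$ iff the procedure reports that $\lambda=1$ is non-isolated. No new analysis is required beyond Theorem~\ref{mainthm}, since the underlying Algorithm~\ref{mainalg} (together with Proposition~\ref{limitlem} and Lemma~\ref{limitsCompute}) is agnostic to the particular rational cutpoint in $[0,1]$.

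There is no serious obstacle here; the only point that deserves a sentence in the written proof is to justify why approaching $1$ from below is the same as approaching $1$ in absolute value, which is immediate from $f_{\pfa}(w)\le 1$. Correspondingly, the proof in the paper can be stated in a few lines by simply unwinding the two definitions and appealing to Theorem~\ref{mainthm}.
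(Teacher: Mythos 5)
Your proposal is correct and follows exactly the paper's route: the paper's proof is the one-line observation that the value-$1$ problem is equivalent to non-isolation of the cutpoint $\lambda=1$ (citing \cite{GO10}), followed by an appeal to Theorem~\ref{mainthm}. You merely make the definitional equivalence explicit via $|f_{\pfa}(w)-1|=1-f_{\pfa}(w)$, which is a fine (and slightly more self-contained) way to write the same argument.
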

\begin{proof}
This is trivial since the value-$1$ problem is equivalent to the isolation of the cutpoint $1$ for a PFA \cite{GO10}.
\end{proof}

Finally we note that Theorem~\ref{mainthm} trivially allows us to determine if a PFA which is allowed a fixed number of alternations between modes of operation (i.e.\ a fixed maximum number of alternations between input letters) has isolated cutpoints.

\begin{cor}
Given a probabilistic finite automaton $\pfa$ on alphabet $\Sigma = \{a_1, \ldots, a_\ell\}$, cutpoint $\lambda \in [0, 1]$ and maximum number $k > 0$ of alternations between input letters, then determining if the cutpoint is isolated is decidable.
\end{cor}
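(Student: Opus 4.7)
The plan is to reduce the problem to finitely many invocations of Theorem~\ref{mainthm}. Any input word over $\Sigma$ with at most $k$ alternations can be written as $a_{i_1}^{n_1} a_{i_2}^{n_2} \cdots a_{i_m}^{n_m}$ with $1 \le m \le k+1$, consecutive letters distinct ($i_j \neq i_{j+1}$), and $n_j \ge 0$; I will call the tuple $(i_1,\ldots,i_m)$ the \emph{pattern}. Patterns range over a finite set $\mathcal{I}$ of size at most $\sum_{m=1}^{k+1}\ell(\ell-1)^{m-1}$, which can be enumerated explicitly.

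For each pattern $I=(i_1,\ldots,i_m) \in \mathcal{I}$, I would introduce fresh distinct letters $b_1^I,\ldots,b_m^I$ and associate the matrix $A_{i_j}$ with $b_j^I$, yielding an auxiliary PFA $\pfa_I$ on alphabet $\{b_1^I,\ldots,b_m^I\}$ that retains the initial and final vectors $\bra{u}$ and $\ket{v}$ of $\pfa$. Setting $\mathcal{L}_I = (b_1^I)^* (b_2^I)^* \cdots (b_m^I)^*$ gives a genuine letter-bounded (and hence letter-bounded context-free) language, and by construction one has
\[
f_{\pfa_I}\!\left((b_1^I)^{n_1}\cdots (b_m^I)^{n_m}\right) \;=\; \bra{u}A_{i_1}^{n_1}\cdots A_{i_m}^{n_m}\ket{v} \;=\; f_{\pfa}(a_{i_1}^{n_1}\cdots a_{i_m}^{n_m}),
\]
so the image of the acceptance function of $\pfa_I$ on $\mathcal{L}_I$ coincides with the set of acceptance probabilities of $\pfa$ on the words of pattern $I$.

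The algorithm then invokes the procedure of Theorem~\ref{mainthm} on each of the finitely many pairs $(\pfa_I, \mathcal{L}_I)$ and reports that $\lambda$ is isolated if and only if it is isolated for every $\pfa_I$. Correctness follows because the closure of a finite union of sets equals the union of their closures, so $\lambda$ is non-isolated for the original problem iff $\lambda$ is non-isolated for at least one $\pfa_I$. I do not anticipate a genuine technical obstacle: the only subtlety is that a pattern may repeat letters while the letter-bounded framework requires them to be distinct, and this is handled cleanly by the relabeling in the auxiliary PFA; all substantive decidability content is deferred to Theorem~\ref{mainthm}.
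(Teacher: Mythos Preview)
Your proposal is correct and follows essentially the same approach as the paper: enumerate the finitely many block patterns and invoke Theorem~\ref{mainthm} on each. The paper writes this slightly more tersely as applying Algorithm~\ref{thealg} to every language in $\Lambda = \{w_1^* w_2^* \cdots w_k^* \mid w_i \in \Sigma\}$; your explicit relabeling via fresh letters $b_j^I$ is a welcome clarification, since it makes precise how repeated letters in a pattern are reconciled with the ``distinct letters'' requirement in the definition of a letter-bounded language.
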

\begin{proof}
We may apply Algorithm~\ref{thealg} on $\pfa$ and $\lambda$ with each language from the following (finite) set of letter-bounded languages $\Lambda = \{w_1^* w_2^* \ldots w_k^* | w_i \in \Sigma\}$.

This defines the set of inputs where we alternate between the modes of operation a maximum of $k$ times (analogous to how counter automata models are often studied with a maximum number of alternations between increasing and decreasing the counters). If any $L \in \Lambda$ on Algorithm~\ref{thealg} returns that the cutpoint is not isolated then $\lambda$ is not isolated for $\pfa$ with a maximum number of alternations $k$, otherwise the cutpoint is isolated.
\end{proof}

\section{Conclusion}
In this work we showed that the cutpoint isolation problem is decidable for PFA when the input words are constrained to come from a letter-bounded  context-free language, even for exponentially ambiguous PFA. This is in contrast to the situation for the (strict) emptiness problem and the injectivity problem, which are \emph{undecidable} even over more restricted PFA for which all matrices commute, the PFA is polynomially ambiguous and the input words are over a simple letter-bounded language $a_1^*\cdots a_\ell^*$. We show that if the cutpoint is isolated for words over the input language, then the emptiness problem becomes decidable. We also show that the value-$1$ problem is decidable for these restricted input words.

It would be interesting to determine the complexity of the cutpoint isolation problem more precisely. We show an NP-hard lower bound in Theorem~\ref{npthm}. The algorithm we provide may belong to NP, however there are some issues with showing this upper bound, namely that the number of limits of a stochastic matrix may be exponential in its dimension and the value of constant $C$ from Proposition~\ref{limitlem} may be exponential in terms of the bit size of the matrices, making the verification stage of an NP algorithm difficult to achieve. Extending the results to more general bounded languages would also be an interesting future work.

\bibliographystyle{abbrv}
\bibliography{refs}

\newpage

\section{Appendix - proof of Theorem~\ref{npthm}}

We use a reduction from the subset sum problem, defined thus: given a set of positive integers $S = \{x_1, x_2, \ldots, x_k\} \subseteq \N$ and a natural number $T \in \N$, does there exist a subset $S' \subseteq S$ such that $\sum_{\ell \in S'} \ell = T$? This problem is well known to be NP-complete \cite{GJ79}. We define the set of matrices $M = \{A_i, B_i | 1 \leq i \leq k\} \subseteq \Q^{3 \times 3}$ in the following way:
$$
A_i = \frac{1}{x_i+1} \begin{pmatrix} 1 & x_i & 0 \\ 0 & 1 & x_i \\ 0 & 0 & x_i + 1 \end{pmatrix}, \quad B_i = \frac{1}{x_i+1} \begin{pmatrix} 1 & 0 & x_i \\ 0 & 1 & x_i \\ 0 & 0 & x_i + 1 \end{pmatrix}
$$
Note that $A_i$ and $B_i$ are thus row stochastic. Let $u = (1, 0, 0)^\top$ be the initial probability distribution, $v = (0,1,0)^\top$ be the final state vector and let $\pfa = (\bra{u}, \{A_i, B_i\}, \ket{v})$ be our PFA. We define the cutpoint $\lambda = \frac{T}{y}$, where $y = \sum_{j= 1}^k (x_j+1)$. Define letter-bounded language $\mathcal{L} = (a_1 | b_1) (a_2 | b_2) \cdots (a_k | b_k) \subseteq a_1^*b_1^*a_2^*b_2^*  \cdots a_k^*b_k^*$ and define a morphism $\varphi: \{a_i, b_i | 1 \leq i \leq k\}^* \to \{A_i, B_i | 1 \leq i \leq k\}^*$ in the natural way (e.g. the morphism induced by $\varphi(a_i) = A_i$ and $\varphi(b_i) = B_i$). Now, for a word $w = w_1w_2 \cdots w_k \in \mathcal{L}$, note that $w_j \in \{a_j, b_j\}$ for $1 \leq j \leq k$. Define that $\mathfrak{v}(a_i) = x_i$ and $\mathfrak{v}(b_i) = 0$ and inductively extend to $\mathfrak{v}:\Sigma^* \to \N$ by defining $\mathfrak{v}(w_1w_2 \cdots w_k) = \mathfrak{v}(w_1) + \mathfrak{v}(w_2 \cdots w_k)$ with $\mathfrak{v}(\varepsilon) = 0$. In this case, we see that (due to the structure of $A_i$ and $B_i$):
\[
\bra{u}\varphi(w_1 w_2 \cdots w_k)\ket{v} = \frac{ \mathfrak{v}(w)}{\sum_{j = 1}^{k}(x_j+1)} 
\] 
Note of course that the factor $\frac{1}{\sum_{j = 1}^{k}(x_j+1)}$ is the same for any $w \in \mathcal{L}$.

Assume that there exists a solution to the subset sum problem, i.e., there exists $S' \subseteq S$ such that $\sum_{\ell \in S'} \ell = T$. Then consider word $w = w_1w_2 \cdots w_k$ such that $w_j = a_j$ if $x_j \in S'$ and $w_j = b_j$ otherwise. In this case, $\sum_{i \in S'}^{k} x_i = \mathfrak{v}(w)$ and thus $\bra{u}\varphi(w_1 w_2 \cdots w_k)\ket{v} = \frac{T}{y} = \lambda$. If no solution exists, then for any word $w = w_1w_2 \cdots w_k$, $\left|\mathfrak{v}(w) - T\right| \geq 1$, and so $\left|\bra{u}\varphi(w_1 w_2 \cdots w_k)\ket{v} - \lambda\right| > \frac{1}{\sum_{j = 1}^{k}(x_j+1)}$ and thus $\lambda$ cannot be arbitrarily approximated.

Clearly the representation size of the PFA $\pfa$ and $\lambda$ are polynomial in the representation size of the subset sum problem instance and therefore we are done.

\end{document}